\documentclass[11pt]{article}

\usepackage[T5]{fontenc}
\usepackage[margin=1in]{geometry}
\usepackage[labelfont=bf]{caption}
\usepackage{tabularray}

\usepackage{algorithm,algpseudocode,amsmath,amssymb,amsthm,float,mathtools,thmtools,thm-restate,scalerel,stackengine,verbatim,subcaption,bbm}
\usepackage[shortlabels]{enumitem}
\usepackage{tikz}
\usepackage[pagebackref]{hyperref}
\hypersetup{
    colorlinks,
    linkcolor={red!100!black},
    citecolor={blue!100!black},
}
\usepackage[capitalize,nameinlink]{cleveref}
\usepackage{lmodern}
\usetikzlibrary{decorations.pathreplacing, positioning}
 
\stackMath
\newcommand\reallywidehat[1]{%
\savestack{\tmpbox}{\stretchto{%
  \scaleto{%
      \scalerel*[\widthof{\ensuremath{#1}}]{\kern-.6pt\bigwedge\kern-.6pt}%
          {\rule[-\textheight/2]{1ex}{\textheight}}
            }{\textheight}%
            }{0.5ex}}%
            \stackon[1pt]{#1}{\tmpbox}%
            }
\newcommand{\bool}{\set{0,1}}
\newcommand{\bools}{\set{1,-1}}

\newcommand{\Cbb}{\mathbb{C}}

\newcommand{\Nbb}{\mathbb{N}}
\newcommand{\Rbb}{\mathbb{R}}

\newcommand{\Ac}{\mathcal{A}}

\newcommand{\Ec}{\mathcal{E}}
\newcommand{\Fc}{\mathcal{F}}

\newcommand{\Nc}{\mathcal{N}}

\newcommand{\Qc}{\mathcal{Q}}

\newcommand{\Tc}{\mathcal{T}}
\newcommand{\Uc}{\mathcal{U}}

\newcommand{\wh}[1]{\widehat{#1}}
\newcommand{\eps}{\varepsilon}

\newcommand{\prob}[2][]{\operatorname*{\mathbb{P}}_{#1 }\brac*{#2}}
\newcommand{\E}[2][]{\operatorname*{\mathbb{E}}_{#1 }\brac*{#2}}
\newcommand{\Var}[2][]{\operatorname*{\normalfont{\text{Var}}}_{#1 }\paren*{#2}}

\newcommand{\bO}[1]{\operatorname*{O}\paren*{#1}}

\newcommand{\lO}[1]{\operatorname*{o}\paren*{#1}}
\newcommand{\bOm}[1]{\operatorname*{\Omega}\paren*{#1}}

\newcommand{\bT}[1]{\operatorname*{\Theta}\paren*{#1}}

\DeclarePairedDelimiter\floor{\lfloor}{\rfloor}
\DeclarePairedDelimiter\ceil{\lceil}{\rceil}
\DeclarePairedDelimiter\norm{\lVert}{\rVert}
\DeclarePairedDelimiter\abs{|}{|}
\DeclarePairedDelimiter\brac{\lbrack}{\rbrack}
\DeclarePairedDelimiter\set{\lbrace}{\rbrace}
\DeclarePairedDelimiter\paren{\lparen}{\rparen}

\DeclarePairedDelimiter\ket{|}{\rangle}
\DeclarePairedDelimiterX\braket[2]{\langle}{\rangle}{#1\delimsize\vert#2}

\newtheorem{theorem}{Theorem}
\newtheorem{lemma}[theorem]{Lemma}
\newtheorem{fact}[theorem]{Fact}
\newtheorem{result}[theorem]{Result}

\theoremstyle{definition}

\theoremstyle{remark}

\newcommand{\nth}{^\text{th}}

\newcommand{\yes}{$\mathtt{YES}$}
\newcommand{\no}{$\mathtt{NO}$}

\newtheorem*{BHP}{\bf The $f$-Boolean Hidden Partition problem ($f\operatorname{-BHP}^{\alpha}_{k,n}$)}

{\makeatletter
	\gdef\commentmark{%
		\expandafter\ifx\csname @mpargs\endcsname\relax 
		\expandafter\ifx\csname @captype\endcsname\relax 
		\marginpar{comment}
		\else
		comment 
		\fi
		\else
		comment 
		\fi}
	\gdef\comment{\@ifnextchar[\comment@lab\comment@nolab}
	\long\gdef\comment@lab[#1]#2{{\bf [\commentmark #2 ---{\sc #1}]}}
	\long\gdef\comment@nolab#1{{\bf [\commentmark #1]}}
}

\newcommand{\bhp}[1][f]{\text{$#1$-\normalfont{BHP}}}

\title{Characterizing Quantum Advantage for Generalizations of the Boolean Hidden Matching Problem}
\date{}
\author{
  Mark Bun\footnote{Boston University, \texttt{mbun@bu.edu}} \and
  Joao F. Doriguello\footnote{HUN-REN Alfr\'ed R\'enyi Institute of
  Mathematics, \texttt{doriguello@renyi.hu}} \and
  John Kallaugher\footnote{National University of Singapore,
  \texttt{john.kallaugher@nus.edu.sg}} \and
  Nadezhda Voronova\footnote{IRIF (CNRS \& Universit\'e Paris Cit\'e),
  \texttt{voronova@irif.fr}}
}

\begin{document}
\maketitle

\begin{abstract}
\noindent
We study the one-way communication complexity of the $f$-Boolean Hidden
Partition problem. Here, Alice is given an $n$-bit string and Bob is given
$\Omega(n)$ disjoint blocks of its indices together with a string of labels.
Under the promise that the evaluations of $f$ on these blocks either agree with
all of the labels or disagree with all of them, Bob must determine which is the
case using a single message from Alice. This problem generalizes the Boolean
Hidden Matching and Hidden Hypermatching problems, which capture the special
case where $f$ is the parity function.

We establish classical and quantum communication bounds for $f$-Boolean Hidden
Partition in terms of the sign degree $d$ of $f$, proving a conjecture of
Doriguello and Montanaro (TQC 2020). Their prior work gave logarithmic
communication upper bounds for classical protocols when $d \le 1$ and quantum
protocols when $d \le 2$, as well as polynomial lower bounds for certain
structured functions with larger sign degree. We show that for every $f$ of
sign degree $d \ge 2$, the randomized classical communication complexity of
this problem is $\Theta(n^{1-1/d})$ while its quantum communication complexity
lies between $\Omega(n^{1-2/d})$ and $\bO{n^{1-1/\lceil d/2 \rceil} \log n}$.
This completely characterizes the functions $f$ admitting polynomial quantum
advantage as those for which $d \ge 2$, with a new infinite family of
exponential separations given by the case $d = 2$. 
\end{abstract}

\section{Introduction}

When can a short quantum message be more useful than a much longer classical one? The model of \emph{one-way communication} gives a clean setting in which to study this question. In this model, two parties, Alice and Bob, are given inputs $x$ and $y$ respectively. Alice sends a single message to Bob whose goal is to evaluate a function $F(x, y)$. The efficiency of a classical protocol~\cite{yao1979some} is measured by the number of bits Alice sends, whereas that of a quantum protocol~\cite{yao1993quantum} is the number of qubits that comprise her message. In contrast to the setting of time complexity, we know strong, unconditional results demonstrating the advantage of quantum protocols over classical ones in this and in related communication models~\cite{raz1999exponential,Buhrman2001quantum,bar2004exponential,GKKRW07,verbin2011streaming,montanaro2011new,SWY12,doriguello2019quantum,DoriguelloMontanaro20}.

Bar-Yossef, Jayram, and Kerenidis~\cite{bar2004exponential} introduced the Hidden Matching problem to obtain an exponential separation between quantum and classical one-way communication complexity. In their relational problem, Alice is given an $n$-bit string, Bob is given $\alpha n / 2$ pairs of indices (for some constant $\alpha \in (0, 1)$), and his goal is to output the parity of Alice's bits at one of his given pairs. Gavinsky, Kempe, Kerenidis, Raz, and de Wolf~\cite{GKKRW07} established an exponential separation for a decisional version of this problem (Boolean Hidden Matching) in which Bob also receives $\alpha n / 2$ labels and must distinguish between the case where the parities of pairs of Alice's bits either agree with all of the labels or disagree with all of them. Both problems admit an $\bO{\log n}$-qubit protocol in which Alice transmits the uniform superposition of her bits. By measuring in a basis adapted to his given pairs, Bob can learn the parity of Alice's bits at a random pair. Meanwhile, these problems require message length $\Omega(\sqrt{n})$ to solve classically, matching the protocol in which Alice sends a random sample of her input in the hope of transmitting a matched pair.

Such strong separations seem to hold only for highly structured problems, raising the following questions. What features of the (Boolean) Hidden Matching problems are essential in yielding exponential separations? What modifications enable the construction of broader families of problems for which these separations hold? Replacing pairs of bits by larger blocks of size $k$ yields the Boolean Hidden Hypermatching problem~\cite{verbin2011streaming}. This yields a hierarchy of classical lower bounds, but does not preserve the exponential quantum advantage: to distinguish parities on $k \ge 3$ bits, Shi, Wu, and Yu~\cite{SWY12} proved an $\Omega(n^{1-2/k})$ quantum communication lower bound. Hence, the existence of an exponential separation seems quite sensitive to the predicate being applied to blocks of bits (i.e., the two-bit parity function vs.\ parities on $k$ bits).

Motivated by these questions, Doriguello and Montanaro~\cite{DoriguelloMontanaro20} introduced a generalization of the Boolean Hidden Matching problem called the $f$-Boolean Hidden Partition problem, in which blocks of Alice's bits are combined according to a Boolean function $f$. Suitable choices of $f$ yield an infinite class of problems exhibiting exponential quantum advantage. However, Doriguello and Montanaro could only establish a partial characterization of the functions $f$ for which such a separation holds. In this work, we revisit the $f$-Boolean Hidden Partition problem and close several gaps left by~\cite{DoriguelloMontanaro20}. We give a complete classification of the functions $f$ exhibiting exponential quantum advantage and polynomial quantum advantage. In doing so, we further expand the family of exponential separations between quantum and classical one-way communication complexity.

\subsection{The \texorpdfstring{$f$}{f}-Boolean Hidden Partition Problem}

In this paper, let $[n]:=\{1,\dots,n\}$ for $n\in \mathbb{N} := \{1,2,\dots\}$. Given $x\in\{1,-1\}^n$, let $\overline{x}$ be its bitwise negation, i.e., $\overline{x}_i=-x_i$ for every $i\in[n]$. For a set $S\subseteq[n]$, let $x_S$ be the restriction of $x$ to the coordinates in $S$, i.e., $(x_S)_i=x_i$ for every $i\in S$. 

The $f$-Boolean Hidden Partition problem, denoted by $\bhp_{k,n}^\alpha$, is a one-way communication problem introduced in~\cite{DoriguelloMontanaro20} and is parameterized by a Boolean function $f:\{1,-1\}^k\to \{1,-1\}$, an integer $n\in \mathbb{N}$, and a real number $\alpha\in(0,1)$ such that $\alpha n/k$ is an integer. We will think of the first $\alpha n$ indices in $[n]$ as being divided into $\alpha n/k$ disjoint blocks $(B_i)_{i=1}^{\alpha n/k}$ each of size $k$, where $B_i=((i-1)k+1,\dots,ik)$. For any string $z\in \{1,-1\}^n$ and fixed permutation $\sigma$ on $[n]$, let $z^i$ be the restriction of $z$ to the coordinates in the block $\sigma(B_i)$, i.e., $(z^i)_j=z_{\sigma((i-1)k+j)}$ for every $j\in[k]$.

\begin{BHP}
    Let $\alpha\in(0,1)$ and $n,k\in\mathbb{N}$ such that $k|\alpha n$. Let $f:\{1,-1\}^k\to \{1,-1\}$. Alice is given $x\in \{1,-1\}^n$, and Bob is given a permutation $\sigma$ on $[n]$ and $y\in \{1,-1\}^{\alpha n/k}$. The players are promised that either $f(x^i)=y_i$ for every $i\in[\alpha n/k]$ (a \yes{} instance) or $f(x^i)=-y_i$ for every $i\in[\alpha n/k]$ (a \no{} instance). Only Alice may send a message to Bob. The objective is for Bob to determine which of these cases holds.
\end{BHP}

The Boolean Hidden Matching problem~\cite{GKKRW07} is the special case of $\bhp_{k,n}^\alpha$ where $k = 2$ and $f$ is the parity function. As described above, it exhibits an $\Omega(\sqrt{n})$ vs.\ $\bO{\log{n}}$ separation between classical and quantum one-way communication complexity. The $f$-Boolean Hidden Partition problem generalizes the Boolean Hidden Matching problem by allowing $f$ to be any Boolean function. 

As conjectured and partially substantiated in~\cite{DoriguelloMontanaro20}, the sign degree of $f$ is the key parameter that determines the classical and quantum communication complexity of $\bhp_{k,n}^\alpha$. A polynomial $p:\{1,-1\}^k\to \mathbb{R}$ is said to \emph{sign-represent} $f$ if $p(z)f(z)>0$ for every $z\in\{1,-1\}^k$. The \emph{sign degree} of $f$, denoted by $\mathrm{deg}_\pm(f)$, is the minimum degree of a polynomial that sign-represents $f$. Doriguello and Montanaro~\cite{DoriguelloMontanaro20} showed that if $\mathrm{deg}_\pm(f)\le 1$, then there is a classical protocol for $\bhp_{k,n}^\alpha$ with $\bO{\log n}$ communication, and if $\mathrm{deg}_\pm(f)\le 2$, then there is a quantum protocol for $\bhp_{k,n}^\alpha$ with $\bO{\log n}$ communication. 

On the other hand, \cite{DoriguelloMontanaro20} proved lower bounds in terms of the \emph{pure high degree} of $f$, denoted $\mathrm{phd}(f)$, which is the least degree of a parity function that has nonzero correlation with $f$ under the uniform distribution. Pure high degree is known to be at most the sign degree of $f$~\cite{sherstov2007separating,bun2015dual} but can be much smaller.\footnote{For example, the 3-bit predicate that outputs $1$ iff all input bits agree has sign degree two, but its expectation under the uniform distribution is $-1/2$, so its pure high degree is zero.} Doriguello and Montanaro~\cite{DoriguelloMontanaro20} showed that if $\mathrm{phd}(f) = d \geq 2$, then any classical protocol for $\bhp_{k,n}^\alpha$ requires $\Omega(n^{1-1/d})$ bits, and if $\mathrm{phd}(f) = d \geq 3$, then any quantum protocol for $\bhp_{k,n}^\alpha$ requires $\Omega(n^{1-2/d})$ qubits. However, they left open the question of whether these results are tight and conjectured that their lower bounds could be improved to hold for the sign degree of $f$ instead of the pure high degree. 

\subsection{Our Results}

Our results confirm sign degree as the key parameter governing the quantum and classical one-way communication complexities of all problems in the $\bhp$ family. Sign degree determines classical communication complexity up to constant factors, characterizes the functions $f$ admitting exponential quantum advantage, and gives quantum bounds that are tight up to logarithmic factors whenever the sign degree is even.

 In one direction, we resolve the conjecture of~\cite{DoriguelloMontanaro20}  by proving classical and quantum communication complexity lower bounds for $\bhp_{k,n}^\alpha$ in terms of the sign degree of $f$.
\begin{result}
    For every $f:\{1,-1\}^k\to \{1,-1\}$ with sign degree $d\ge 2$, the randomized classical communication complexity of $\bhp_{k,n}^\alpha$ is $\Omega(n^{1-1/d})$. Given $f:\{1,-1\}^k\to \{1,-1\}$ with sign degree $d\ge 3$, the quantum communication complexity of $\bhp_{k,n}^\alpha$ is $\Omega(n^{1-2/d})$.
\end{result}

In the other direction, we prove upper bounds for $\bhp_{k,n}^\alpha$ in terms of the sign degree of $f$ beyond the case $\mathrm{deg}_\pm(f) \leq 2$ from~\cite{DoriguelloMontanaro20}.
\begin{result}
    For every $f:\{1,-1\}^k\to \{1,-1\}$ with sign degree $d\ge 2$, the randomized classical communication complexity of $\bhp_{k,n}^\alpha$ is $\mathrm{O}(n^{1-1/d})$, while its quantum communication complexity is $\mathrm{O}(n^{1-1/\lceil d/2\rceil}\log n)$.
\end{result}

Putting these results together, we completely determine the classical complexity of $\bhp_{k,n}^\alpha$ as $\Theta(n^{1-1/d})$ whenever $d=\mathrm{deg}_\pm(f) \ge 2$. Meanwhile, its quantum complexity is $\widetilde{\Theta}(n^{1-2/d})$ for even $d$ and lies between $\Omega(n^{1-2/d})$ and $\mathrm{O}(n^{1-1/\lceil d/2\rceil}\log n)$ for odd $d$.

In combination with the prior results of~\cite{DoriguelloMontanaro20}, our bounds divide the family of $\bhp$ problems into three main regimes depending on the sign degree of $f$. When $\mathrm{deg}_\pm(f) \le 1$, logarithmic classical communication already suffices. When $\mathrm{deg}_\pm(f) = 2$, the classical complexity is $\Theta(\sqrt{n})$ while there is a  quantum protocol using only $\bO{\log n}$ qubits. When $\mathrm{deg}_\pm(f) \ge 3$, quantum communication itself must be polynomial in $n$, but always remains polynomially smaller than classical communication. Thus, the functions $f$ of sign degree two are exactly those yielding an exponential separation between classical and quantum communication complexity. Moreover, every sign degree at least two yields (at least) a polynomial quantum advantage. 

\paragraph{Technical Contribution.} We prove our lower bounds using Boolean Fourier analysis and hypercontractivity, which are standard tools for one-way communication lower bounds going back to~\cite{GKKRW07}.
However, to the best of our knowledge, they have never\footnote{There are two
similar-sounding, but technically very distinct scenarios to distinguish from
ours here. The first is where the hardness of the communication problem is
proved for a product distribution, but then a reduction subsamples Bob's input
in a way that depends on Alice's input (e.g., \cite{KKS15,CGSV24}). A second is
the setting of~\cite{KKP18} in which multiple players have uniform inputs that
obey some parity constraints jointly, and all send messages to a referee. This
is morally more similar but the actual effect is to have the referee consider a
different but still independent set of parities.} been used to prove lower
bounds in the case where the hard distribution requires Alice's input to depend
on Bob's choice of parities, and the standard set of tools breaks as soon as the
assumption of independence is lost. Our technical contribution in this paper is
a set of tools that allow significant correlation between Alice and Bob. We
expect these to be of use in analysing other one-way problems that are too easy
for independent inputs.

\subsection{Proof Overview: Lower Bound}
To understand our lower bound, it helps to understand why we would expect it to
hold in the first place. Suppose Alice sends $|S|$ of her bits to Bob at random.
For every parameter $r\leq |S|$, this yields $\binom{|S|}{r}$ $r$-tuples of bits.
In expectation, $\frac{\alpha n}{k} \cdot \binom{|S|}{r} \cdot \binom{k}{r}
\cdot \binom{n}{r}^{-1} = \bT{\frac{|S|^r}{n^{r-1}}}$ of these tuples land entirely
within some block. So at least for this class\footnote{It is helpful for
intuition to think of the corresponding quantum class of protocols as sampling
pairs at the same cost at which the classical protocols sample singletons. A
simple quantum protocol therefore allows Bob to learn the parity of $d$ bits from within at
least one block at the cost of $\bO{n^{1 - 1/\ceil{d/2}}\log n}$ qubits sent from
Alice.} of protocols, $\Theta(n^{1 - 1/d})$ bits is the threshold for successfully
teaching Bob a size-$d$ parity of any block. But why is this necessary? After all, a function with high sign degree can still be determined by a
single bit, e.g., the $d$-bit function given by $x_1 \wedge (x_2 \oplus
\dots \oplus x_d)$. 

The reason comes from the dual (in the sense of linear programming duality) characterization of sign degree. If $f$ has sign
degree $d$, then there is a distribution $q$ on $\bools^k$ such that
\[
	\E[z\sim q]{f(z)\chi_S(z)}=0
	\qquad\text{for every }|S|<d.
\]
Therefore, if Alice's input on a given block $i$ were sampled from $q$, any subset of its bits of size $< d$ would not provide any meaningful information to Bob since his
posterior on $f(x^i)$ would be uniform even conditioned on those bits.

We now need a way to generalize this argument to protocols where Alice sends
arbitrary classical or quantum messages. In both cases, the standard approach
runs through Boolean Fourier analysis via (in the quantum case, matrix)
hypercontractivity~\cite{bonami1970etude,KKL88,BARdW08}. These tools (which we describe in more detail later) allow us to show that Bob learns not much more about parities than he would have if Alice
had just sent random bits. Loosely speaking, we can say that if Alice sends Bob a
$c$-bit message, he learns at most $\binom{\bO{c}}{r}$ ``about'' size-$r$
parities\footnote{Formally this is quantified in terms
of the squares of his posterior expectations on those parities; in the sampling
case he knows a parity with certainty or not at all so these are $0$-$1$
values.}. We can then exploit
the randomness of which parities Bob cares about (determined by
the permutation $\sigma$ in our case) and argue that if Alice sends too short of a message, Bob
will not learn enough about size-$d$ parities to change his posterior on
any one block or to change it a little on many blocks. In the quantum case, the
tools are somewhat different, but the structure of the argument is
similar.

However, a problem immediately arises. The argument above implicitly assumes
that the parities Bob cares about are independent of Alice's input and,
consequently, the message she sends. For the special cases of Boolean Hidden (Hyper)Matching, this is true: the ``hard distribution'' is for Alice to receive
a uniformly random string, uncorrelated with the parities Bob will ask about.
But if we want Alice's input on a block to be drawn from $q$, this can no
longer be the case. In fact, the block permutation $\sigma$ can be very
correlated with Alice's input. There could be, in general, message-permutation
pairs that are $2^{\bT{n}}$ times more likely than the product of their
marginal probabilities, and many more that have zero probability of arising
together.

This motivates the way we choose the rest of our hard distribution $Q$ (see
\cref{sec:harddist}). The permutation $\sigma$ on $\brac{n}$ is chosen
uniformly at random\footnote{Note that it is still not revealed to Alice, which
would make the problem trivial.}, the blocks $(x^i)_{i \in \brac{\alpha n/k}}$
are sampled independently from $q$, and every other bit is sampled uniformly at
random. 

If we want to bound how much Bob learns about $f(x^i)$ in expectation, we can
condition on the restriction of $\sigma$ to every other block (but not the
uniform bits), and then the correlation between Alice's input on block $i$ and
the uniform bits with the remaining permutation is bounded by a constant
depending only on $q$. However, this is not quite enough, for Bob could learn
nothing about any $f(x^i)$ and still solve the problem. For instance, if Alice
successfully sends him the value of $f(x^1)f(x^2)f(x^3)$, he could compare this
with $y_1y_2y_3$. Traditionally, this has been solved by analyzing how much
information Alice sends about higher-order parities of her input; the same
argument that Bob only learns $\bO{\frac{c^d}{n^{d-1}}}$ about individual
blocks generalizes to showing that he only learns $\bO{\frac{c^{2d}}{n^{2d -
2}}}$ about pairs of blocks, and so on. But this option is not available to us:
our method for bounding the correlation between $\sigma$ and Alice's input
breaks down when we start considering many blocks at a time. If we want to
bound how much Bob learns about $\prod_{i \in U}f(x^i)$, we can condition on
$\sigma$ restricted to the blocks $B_i$ for $i \not\in U$, but the remainder of
$\sigma$ will still have $2^{\bT{\abs{U}}}$ correlation with Alice's input.

Instead, we imagine revealing $f(x^1),f(x^2),\ldots$ one block at a time and ask how much Bob learns about $f(x^i)$ conditional on
$(f(x^j))_{j<i}$ (but before he looks at $y_i$). If Bob has learned anything
about $(f(x^i))_{i=1}^{\alpha n/k}$, he must have learned something about
these conditional single-block values, even if he has not learned anything
about any given block unconditionally. 

Thus both proofs reduce to the question:
\textit{Suppose Bob already knows the values of $f$ on the previous blocks. How
much can Alice's short message tell him about $f$ on the next random block?}

For any given message, this can be dramatically more than what Bob learns
unconditionally---consider the case where Alice has successfully informed him about the XOR
of three blocks. But on average across messages, it cannot be. Suppose that
before the game starts, Alice and Bob already knew that $f(x^1) = \dots =
f(x^{\alpha n/k - 1}) = -1$. Suppose that beyond that, they knew the value of
$\sigma$ restricted to the first $\alpha n / k - 1$ blocks. Then $f(x^{\alpha n
/k})$ would still be uniformly distributed, and the challenge of designing a
protocol would not be significantly reduced: Alice would still have
$(1-\alpha)n + k$ bits of which an unknown $k$ were important to Bob.

The quantum and classical proofs analyse this conditional single-block
expectation using closely related Fourier-analytic ideas. The quantum proof is
relatively direct. The classical proof has one additional obstacle caused by
the fact that $q$ may not have full support.

\subsubsection{Quantum Lower Bound}
Write $m(x)$ for the $c$-qubit density matrix corresponding to Alice's message
when her input is $x$. For any given $(\sigma, y)$, Bob's best possible
advantage over guessing at random is determined by the trace distance (up to an appropriate normalization) \[
\norm*{\sum_{\substack{x \in \bools^n:\\\text{$x$ from a \yes{} instance}}}Q(x)m(x) -
\sum_{\substack{x \in \bools^n:\\\text{$x$ from a \no{} instance}}}Q(x)m(x)}_1.
\]
Through appropriate re-arrangement, we can bound this by the sum, over all $i
\in \brac{\alpha n/k}$, of \[
\norm*{\E[x \sim Q| \sigma, \Fc_z]{f(x^i)m(x)}}_1 
\]
where $\Fc_z$ denotes an event conditioning on $f(x^j) = z_j$ for every $j < i$.
We show (\cref{lm:qfbounds}) that regardless of the value of $\Fc_z$ conditioned on, the expectation of this quantity over $\sigma$ is bounded by
\[
	\bO{\paren*{\frac{c}{n}}^{d/2} }.
\]
We need two ideas to make this work.

\paragraph*{Conditioning away other blocks.} We convert the expectation above
into a $((1 - \alpha)n + k)$-bit function by fixing the restriction of
$\sigma$ to blocks other than $i$ and averaging over all possibilities for
Alice's inputs to those blocks that result in $\Fc_z$ holding. Writing $S$ for
the remaining $(1-\alpha)n + k$ indices, this yields a new matrix-valued
function \[
\nu:\bools^S\to \mathcal{D}(\mathbb{C}^{2^c}), 
\]
which still takes values on $c$-qubit density matrices. Conditioning away other blocks
formalizes the intuition that a version of the game in which Alice and Bob know
the values of blocks $1$ through $i-1$ a priori would be equally hard,
\emph{while} maintaining $\nu$ as a density matrix, which allows using the matrix hypercontractivity
bound~\cite{BARdW08} \[
\sum_{U:\abs{U} = r}\norm{\wh{\nu}(U)}_1^2 \le 4c^r.
\]
Here the Fourier transform of $g:\{1,-1\}^k \to \mathcal{D}(\mathbb{C}^{2^c})$ is defined as $\widehat{g}(U) = \frac{1}{2^k}\sum_{x\in\{1,-1\}^k} g(x)\chi_U(x)$, where $\chi_U(x) = \prod_{i\in U} x_i$ is the parity function on the indices in $U$. In other words,
this averaged-and-conditioned density matrix is still bounded in the amount it
can tell Bob about individual parities of Alice's input (outside of the
irrelevant non-$i$ blocks).

Conditioned on everything we fixed, the $k$ coordinates of the current block
are still placed uniformly at random among these remaining coordinates, and so
in particular any
given parity $U$ \emph{from block $i$} is equally likely to land anywhere, meaning that \[
\norm*{\wh{\nu}(U)}_1^2 \le \paren*{\frac{\bO{c}}{n}}^{\abs{U}}
\] 
in expectation over the not-already-fixed parts of $\sigma$. It remains to use this insight to bound \[
\norm*{\E[x \sim Q| \sigma, \Fc_z]{f(x^i)m(x)}}_1.
\]

\paragraph*{Associating the conditional expectation of $fq$ with degree $\ge d$
parities.} Because we averaged away all but one block where $Q$ was not the
identity, the above expression becomes \[
\norm*{\E[x \sim Q| \sigma, \Fc_z]{f(x^i)m(x)}}_1 =  \norm*{\E[u \sim \Uc]{2^kq(u^i)f(u^i)\nu(u)}} \le \bO{1}\sum_{U
\subseteq \brac{k} : \abs{U} \ge d}\norm{\wh{\nu}(\sigma(U))}_1
\]
by expanding $fq$ as a sum of Fourier coefficients, which are $0$ below degree $d$.

As our bound is on \emph{squares} of Fourier coefficients, we incur a Cauchy-Schwarz loss at this point, bounding \[
\E[\sigma]{\norm{\wh{\nu}(\sigma(U))}_1} \le
\E[\sigma]{\norm{\wh{\nu}(\sigma(U))}_1^2}^{1/2} \le
\paren*{\frac{\bO{c}}{n}}^{\abs{U}/2}
\]
and so concluding that \[
\norm*{\E[x \sim Q| \sigma, \Fc_z]{f(x^i)m(x)}}_1 \le \bO{\paren*{\frac{c}{n}}^{d/2}}
\]
in expectation over $\sigma$ as required, giving us a $c = \Omega(n^{1 - 2/d})$
threshold for this to be $\Omega(1/n)$ and therefore for Bob's total advantage
to be constant.

\subsubsection{Classical Lower Bound}
The classical lower bound starts with a similar reduction to conditional
single-block expectations. Let $m : \bools^n \rightarrow \bool$ denote the
expression of Alice's $c$-bit message as an indicator function\footnote{As
Alice and Bob know the input distribution, we may without loss of generality
assume they use a deterministic protocol, as any randomized protocol converts
to an at-least-as-good deterministic protocol by fixing the random seed at
whichever one maximizes success probability over the input distribution.} (so
$m(x) = 1$
iff Alice having the input $x$ would result in her sending the message $m$). We
use Pinsker's inequality and the chain rule for
relative entropy to bound Bob's advantage over random guessing as the square
root of the sum over all $i \in \brac{\alpha n / k}$ of \[
  \E[\sigma,m\mid \Fc_z]{  \E[x\sim p\mid \Fc_z, m(x) = 1]{f(x^i)}^2 },
\]
where $p$ is Bob's posterior distribution on Alice's inpute before he looks at $y$, and as in the quantum case, $\Fc_z$ denotes conditioning on $f(x^j) = z_j$
for every $j < i$. We show in \cref{lm:fbounds} that this is bounded by \[
\bO{\paren*{\frac{c}{n}}^d}
\]
when Alice sends $c < n$ classical bits, giving a $c = \Omega(n^{1 - 1/d})$ lower
bound for Bob to succeed with $2/3$ probability.

Note the distinction from the quantum case: where the quantum lower bound
argument dealt with the linear quantity
$\E[\sigma]{\norm{\wh{\nu}(\sigma(U))}_1}$ and accepted a Cauchy-Schwarz loss,
here we are bounding a quadratic quantity. This is necessary to obtain a tight
lower bound, but it introduces a major technical challenge, as we will be
bounding it in terms of quantities that can potentially saturate Markov's
inequality.

\paragraph{Conditioning away other blocks.}
Our first step is very similar to the quantum one. Condition on $\sigma$
restricted to the non-$i$ blocks, and define a $((1-\alpha)n + k)$-bit function
$\nu$ by averaging $m(x)$ over all possibilities for Alice's inputs to these
blocks that result in $\Fc_z$ holding. Writing $S$ for
the remaining $(1-\alpha)n + k$ indices, this gives us a function \[
\nu : \bools^S \rightarrow \brac{0,1}
\]
that defines Bob's posterior on $f(x^i)$ by \[
\E[x\sim Q \mid \Fc_z, m(x) = 1]{f(x^i)} = \frac{\sum_{x \in
\bools^S}f(x^i)\nu(x)q(x^i)}{\sum_{x \in \bools^S}\nu(x)q(x^i)}
\]
as block $i$ is the only part of $S$ where $Q$ is not the uniform distribution.

Similarly to the quantum case, expanding $qf$ in the Fourier basis gives \[
\E[x\sim Q \mid \Fc_z, m(x) = 1]{f(x^i)}^2 \le \bO{\Nc^{-2}}\sum_{U
\subseteq \brac{k} : \abs{U} \ge d} \wh{\nu}(\sigma(U))^2,
\]
where $\Nc$ denotes the expectation of $q(x^i)\nu(x)$. At first glance, this is
very promising: use the classical hypercontractivity theorem mentioned earlier
to bound $\sum_{U:\abs{U} = d} \wh{\nu}(\sigma(U))^2$ and
thereby bound Bob's squared posterior in expectation over $\sigma$ restricted
to $S$.

However, there is an issue. For all constant $r$, the hypercontractivity bound
characterizes the ratio of the squared coefficients of $\wh{\nu}$ to the
expectation of $\nu$ itself, for which we write $\Qc$, i.e., \[
\Qc^{-2}\sum_{U:\abs{U} = r} \wh{\nu}(U)^2 \le \bO{\paren*{\frac{C}{r}}^r},
\]
where $C = -\log \Qc$. Similar to the quantum case, here the Fourier transform of $g:\{1,-1\}^k \to \mathbb{R}$ is defined as $\widehat{g}(U) = \frac{1}{2^k}\sum_{x\in\{1,-1\}^k} g(x)\chi_U(x)$. Moreover, $C$ will be $\bO{c}$ with very good probability and will
not cause us major problems, so we will think of it as $c$ from hereon in.

The above means that the Fourier coefficient bound we use has a factor of
$(\Qc/\Nc)^2$, where this ratio corresponds to how much multiplication by
$q(x^i)$ decreases the expected value of $\nu(x)$. If $q$ had full support this
would be a constant, but many natural $f$ (including the example we mentioned
earlier in which a single bit fixes $f$) require $q$ with partial support, and
so the ratio can be arbitrarily large. We have to manage the probability that it is.

We pause here to ask why this was not an issue for the quantum lower bound.
While we do not use this notation explicitly, the ratio $\Qc/\Nc$ does actually
appear implicitly in the density matrix calculation we used. However, as we
were using linear terms, this is mitigated by the fact that $\Nc$ corresponds
to the (conditional) probability that Alice sends the message $m$ that would
generate the averaged function $\nu$. So the effect is to replace weighting by
$\Nc$ with weighting by $\Qc$, and while $\Qc/\Nc$ may be arbitrarily large for
any given message, the sum over all $\Qc$ is exactly $2^k$ times the sum over
all $\Nc$, as $\sum_{x \in \bools^k} 1 = 2^k \sum_{x \in \bools^k}q(x)$.

So to fix the hole in our proof, we need to find a way to replace this
expression with one that has only a linear term for $\Qc/\Nc$. We do so with
\cref{lm:recbound}.

\paragraph{Bounding \texorpdfstring{$\Qc/\Nc$}{N/Q}.}
The first observation is that this ratio itself can be bounded in terms of
Fourier coefficients.  Expanding $q$ as a sum of Fourier coefficients gives
\[
	\Nc = 2^{-k}\Qc + \sum_{U \subseteq \brac{k} : \abs{U} > 0} \wh{q}(U)\wh{\nu}(\sigma(U))
\]
and so the ratio is bounded by $2^{k+1}$ unless some non-empty $U$ has
$\wh{\nu}(\sigma(U))^2 = \bOm{\Qc^2}$ (for a constant dependent on $k$). This has the correct normalization
factor and so the probability that it happens for $\abs{U} = r$ can be bounded by \[
\bO{\paren*{\frac{c}{n}}^r}
\]
over $\sigma$. However, this is still not good enough. As a set with $\abs{U} =
1$ could make the condition hold, we could end up with a $\bO{c/n}$ error term,
which would be far too large for our purposes. We deal with this problem by
recursion, showing that conditional on $\wh{\nu}(\sigma(U))^2 = \bOm{\Qc^2}$
for any given $\abs{U} = r < d$, the expectation of \[
\Nc\E[x\sim Q \mid \Fc_z, m(x) = 1]{f(x^i)}^2
\]
over $\sigma$ is bounded by $\bO{\Qc \paren{\frac{c}{n}}^{d - r}}$.

\paragraph{Recursion.}
Suppose $\wh{\nu}(\sigma(U))^2 = \Qc^2$ for $\abs{U} = r < d$. That would
correspond to $\nu$ fixing the parity $\chi_{\sigma(U)}$: every $x$ such that $\nu(x) >
0$ has $\chi_{\sigma(U)}(x)$ take the same value. For an even stronger supposition,
suppose $\nu$ fixes $x_{\sigma(U)}$ entirely as some $t \in \bools^U$. Then we
could write down restricted functions $q_t$, $f_t$, $\nu_t$ based on fixing the
$U$ bits of the input to be $t$, and rewrite the expectation in terms of those
functions.  In this case we would have $\E[x]{\nu(x)} = 2^{-r} \E[x]{\nu_t(x)}$
and, moreover, for every $V$ with $\abs{V} \le d - r$, \[
\E[x]{q_t(x)f_t(x)\chi_V(x)} = 0.
\]
Then, by adding $U$ to the indices we fix $\sigma$ on and taking expectations
over the remaining permutation $\sigma'$, we could recurse on $q_t, f_t, \nu_t$
to conclude that \[
\E[x\sim Q \mid \Fc_z, m(x) = 1, \sigma|_U]{f(x^i)}^2 \le
\frac{\Qc_t}{\Nc_t}\bO{\left(\frac{c}{n}\right)^{d-r}} = \frac{\Qc}{\Nc}\bO{\left(\frac{c}{n}\right)^{d-r}}
\]
in expectation over $\sigma'$, where the remaining $\frac{\Qc}{\Nc}$ term
measures the correlation between $\sigma$ and Alice's message, as once the
blocks other than $i$ have been conditioned on, $\sigma$ controls the location
of block $i$ and makes a given message $m$ more likely by a factor of \[
2^k\frac{\E[x]{q(x^i)\nu(x)}}{\E[x]{\nu(x)}} = 2^k\frac{\Nc}{\Qc},
\]
where $\Nc$ can be seen as measuring the likelihood of Alice drawing an $x$
resulting in message $m$ after block $i$ is known, and $\Qc$ as that likelihood
before block $i$ is known (recall that $\Qc$ and $\Nc$ both depend on the
message $m$).

Now, in general, $\wh{\nu}(\sigma(U))^2 = \bOm{\Qc^2}$ does not imply $\nu$
fixes $x_{\sigma(U)}$. However, we can still recover the recursive step above
by writing \[
\E[x\sim Q \mid \Fc_z, m(x) = 1, \sigma|_U]{f(x^i)}^2
\]
as a weighted sum over cases where $x_{\sigma(U)}$ is fixed to various values
of $t \in \bools^U$. This suffices to recover the recursion and prove
\cref{lm:recbound}.

\paragraph{Averaging over messages.} Even with \cref{lm:recbound}, some
messages will generate a very large posterior expectation. For instance, there
could be a message that Alice sends on exactly one input, and so given that
message, Bob would have complete certainty about $f(x^i)$, regardless of
$\sigma$. But as we defined $\nu$ by averaging over Alice's inputs for blocks other
than $i$ (using only those that made $\Fc_z$ hold), the probability that Alice sends any given message conditioned on
$\sigma$ and $\Fc_z$ is exactly \[
\sum_{x \in \bools^S} \nu(x) q(x^i) 2^{k-|S|} = 2^k \Nc
\]
and so averaging over all messages gives us \[
\E[m,\sigma]{\E[x\sim Q \mid \Fc_z, m(x) = 1, \sigma|_U]{f(x^i)}^2} \le
\E[\sigma]{\sum_m \frac{\Qc}{\Nc} 2^k\Nc \bO{\paren*{\frac{c}{n}}^d}} \le
\bO{\paren*{\frac{c}{n}}^d}
\]
because each $\nu$ came from averaging a different message, and the messages
partition $\bools^n$. So if $c = \lO{n^{1 - 1/d}}$, Bob's squared
conditional posterior expectation on each $f(x^i)$ is $\lO{1/n}$ and so his
advantage over random guessing is $\lO{1}$.

\subsection{Proof Overview: Upper Bound}
Our upper bound comes from the sign-representing polynomial characterisation of
$f$. There is a degree-$d$ polynomial $p(x)$ such that $p(x)$ always has the
same sign as $f(x)$, and so as $k$ is a constant there is some threshold $\beta$
such that $p(x) > \beta$ whenever $f(x) = 1$ and $p(x) < -\beta$ whenever $f(x)
= -1$. This implies that that in a \yes{} instance of the game, \[
\sum_{i=1}^{\alpha n /k } y_ip(x^i) > \beta \alpha n/k
\]
while in a \no{} instance, the opposite relation holds. As $p$ is a degree-$d$
polynomial, this sum can be expanded as a sum of parities up to degree $d$,
each contained within a single block, and with a constant range of minimum and
maximum coefficient. So, we can rewrite this sum as \[
\sum_{U \subseteq \brac{k}} \alpha_U \sum_{i = 1}^{\alpha n/k}y_i\chi_U(x^i)
\]
for a collection of constants\footnote{For the purpose of this discussion we
are not interested in the dependence on $f$ or $k$.} $\alpha_U$. Therefore, the
difficulty of estimating this sum reduces to the difficulty of sampling
uniformly from the parities $\chi_U$ for any given $U \subseteq \brac{k}$, and
if we can sample $\bT{1}$ of these copies for each such $U$ we can obtain a
sufficiently accurate estimate. For the rest of this section we describe the
case where $\abs{U} = d$; the problem becomes easier as $\abs{U}$ shrinks. 

In the classical case, we have the bound immediately, up to a log
factor\footnote{This log factor can be removed if Alice instead uses a
constant-wise independent hash function to choose which bits to send and in
which order, and then sends Bob the identity of the hash function, rather than
sending him the actual indices.}: Alice sends Bob the identity of $c$ of her
bits, and so he learns any given size-$d$ parity with probability $(c/n)^d$.
She thus need only send him $\Theta(n^{1-1/d})$ bits to teach him the parity
of $\bT{1}$ copies of $U$, as $\alpha$ is constant.

In the quantum case, Alice has access to a more efficient way of communicating
size-2 parities to Bob\footnote{More precisely, to communicate parities among
the unknown-to-her list of parities that Bob is looking for.}. As in the
(Boolean) Hidden Matching protocol of \cite{bar2004exponential,GKKRW07}, she may send the quantum state
\[\frac{1}{\sqrt{n}}\sum_{i = 1}^n x_i \ket{i}\]
and then, taking any list of size 2 parities $(a_j,b_j)$ such that one is in
each block, Bob measures in a basis that contains \[
\frac{\ket{a_j} + \ket{b_j}}{\sqrt{2}}, \frac{\ket{a_j} - \ket{b_j}}{\sqrt{2}}
\]
for every $j$. With probability $\bOm{1}$, one of the $2\alpha n/k$ basis
elements listed above will be measured, and reveal the corresponding parity to
Bob as each of the two states is compatible with exactly one of the two
possible parities $x_{a_j}x_{b_j}$.

Alice and Bob divide $U$ into $\floor{d/2}$ pairs, and possibly a singleton.
Then for each pair $P$ they run the size-two parity sampling procedure above to
recover the parity of $P$ for a random $P$ among their blocks with constant
probability. If they have a singleton they can likewise have Alice sample a
single bit at random and send it. This procedure costs $\bO{\log n}$ qubits, and
for Bob to successfully learn a full copy of $U$, one needs
$\floor{d/2}$ of its pairs sampled, along with the possible singleton. If the
procedure is repeated $c$ times, this happens with probability \[
\bT{c/n}^{\floor{d/2}} \cdot \bT{c/n}^{d - 2\floor{d/2}} = \bT{(c/n)^{\ceil{d/2}}}.
\]
They will therefore succeed with constant probability after Alice sends
$\mathrm{O}(n^{1 - 1/\ceil{d/2}}\log n)$ qubits to obtain a $\bT{1/n}$
per-block chance. Repeating the entire process gives any constant number of
samples, allowing Bob to estimate the desired quantity.

\section{The \texorpdfstring{$f$}{f}-Boolean Hidden Partition Problem}
\label{sec:probdef}
Our parameters are $n, k \in \Nbb$ and $\alpha \in (0,1)$, such
that $\alpha n / k$ is an integer. Let $f :
\bools^k \rightarrow \set{1, -1}$ be a Boolean function.  An instance of the
game consists of a string $x \in \bools^n$ given to Alice, and a permutation
$\sigma$ on $\brac{n}$ and string $y \in \bools^{\alpha n / k}$, both given to Bob.

We will think of $\brac{\alpha n}$ as being divided into ``blocks''
$(B_i)_{i=1}^{\alpha n /k}$, where $B_i = \set{(i-1)k + 1, \dots, ik}$. The
permutations of these blocks under $\sigma$ will induce substrings of an $n$-bit
string: for any $z \in \bools^n$, $z^i$ with $i\in[\alpha n/k]$ denotes the $k$-bit string whose $j\nth$ bit
is $z_{\sigma((i-1)k + j)}$.

\paragraph{Problem.} The players are promised that either $f(x^i) = y_i$ $\forall i \in
\brac{\alpha n/k}$ (a \yes{} instance) or $f(x^i) = -y_i$ $\forall i \in
\brac{\alpha n/k}$ (a \no{} instance). Alice may send a message
to Bob, but Bob may not reply. The objective of the game is for Bob to guess
which of these cases holds.

\paragraph{Additional notation.}
For any set $U$, positive integer $m$, $x \in \bools^m$, and $\pi : U \rightarrow \brac{m}$, we write $x^\pi_U$ for the string in $\bools^U$ given by \[
(x^\pi_U)_i = x_{\pi(i)}.
\]

When we have two disjoint sets $R_1, R_2 \subseteq \brac{n}$ and strings $x \in
\bools^{R_1}, y \in \bools^{R_2}$, we will generalise the concatenation
operation $x \cdot y$ to mean $(x \cdot y) \in \bools^{R_1 \cup R_2}$ with \[
(x \cdot y)_i = \begin{cases}
x_i & i \in R_1\\
y_i & i \in R_2.
\end{cases}
\]

\paragraph{Norms.}
Throughout, for any function $h : \bools^m\rightarrow \Rbb$ and $p \in
\Rbb_{\ge 1}$, \[
\norm{h}_p = \E[x \sim \Uc(\bools^m)]{\abs{h(x)}^p}^{1/p} \qquad\text{and}\qquad \hat{\|}h\hat{\|}_p = \left(\sum_{S \subseteq [m]} \abs{\widehat{h}(S)}^p\right)^{1/p}.
\]
For a matrix $A$, $\norm{A}_p$ will be the Schatten $p$-norm.

\paragraph{Sign degree.}
The \emph{sign degree} of a Boolean function $f : \bools^n \to \bools$, denoted as $\mathrm{deg}_\pm(f)$, is the least total degree of a real polynomial $p : \bools^n \to \Rbb$ such that
\[p(x)f(x) > 0 \quad \text{for every }x\in \bools^n.\]
Any $p$ that satisfies the condition above is said to \emph{sign-represent} $f$.
Sign degree $\mathrm{deg}_\pm(f)$ is sometimes called the \emph{threshold degree} of $f$.

\begin{fact}[Dual characterization of sign degree]
Let $f : \bools^n \to \bools$. Then $\mathrm{deg}_\pm(f) \geq d$ if and only if there exists a probability distribution $q: \bools^n \to [0, 1]$ such that
\[\E[x \sim q]{f(x) \chi_S(x)} = 0 \quad \text{for every } S \subseteq [n], |S| < d.\]
\end{fact}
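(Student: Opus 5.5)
This Fact is an instance of linear programming duality (Farkas' lemma / Gordan's theorem). It suffices to show that $\mathrm{deg}_\pm(f) < d$ if and only if no such $q$ exists. Equivalently: there is a polynomial of degree $<d$ sign-representing $f$ if and only if there is no probability distribution $q$ with $\E[x\sim q]{f(x)\chi_S(x)} = 0$ for all $|S|<d$. Throughout, let $M$ be the $2^n \times N$ matrix, $N = \#\set{S : |S|<d}$, with entries $M_{x,S} = f(x)\chi_S(x)$.

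A polynomial $p = \sum_{|S|<d} c_S \chi_S$ sign-represents $f$ exactly when $Mc > 0$ entrywise. A distribution $q$ with the stated property is exactly a vector $q \ge 0$, $q \ne 0$ (normalize afterwards), with $q^\top M = 0$.

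\emph{Easy direction.} Suppose both $p$ and $q$ exist. Then
\[0 = q^\top M c = \sum_x q(x) f(x) p(x) > 0,\]
since every term $q(x)f(x)p(x)$ is nonnegative and at least one is strictly positive ($q \neq 0$ and $f(x)p(x) > 0$ everywhere). This is a contradiction. So if $q$ exists, then $\mathrm{deg}_\pm(f) \ge d$.

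\emph{Hard direction.} Suppose no $c$ satisfies $Mc > 0$. I will invoke Gordan's alternative: for a real matrix $M$, exactly one of the following holds:
\begin{itemize}
\item there is $c$ with $Mc > 0$;
\item there is $q \ge 0$, $q \ne 0$, with $q^\top M = 0$.
\end{itemize}
To be self-contained, I would derive this from the separating hyperplane theorem. Consider the convex hull $K$ of the rows $r_x = (f(x)\chi_S(x))_{|S|<d}$, which is a compact convex set in $\Rbb^N$.

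If $0 \notin K$, strict separation gives a vector $c$ with $\langle c, r_x\rangle > 0$ for all $x$. That is exactly a sign-representing polynomial of degree $<d$, contradicting the assumption. Hence $0 \in K$. This means there are weights $q(x) \ge 0$ summing to $1$ with $\sum_x q(x) r_x = 0$, which is precisely the desired distribution.

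The only step requiring care is this separation argument, and it is standard because $K$ is a polytope. So I expect no real obstacle. The one subtlety is remembering to include $S = \emptyset$ among the constraints: it forces $\E[q]{f} = 0$, and it corresponds to allowing a constant term in $p$.
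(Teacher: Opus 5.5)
Your proof is correct. The easy direction via $0 = q^\top M c = \sum_x q(x) f(x) p(x) > 0$ and the hard direction via strict separation of $0$ from the polytope $\mathrm{conv}\{r_x\}$ (Gordan's alternative) together form the standard linear-programming duality argument. The paper does not prove this Fact; it only states it and refers to it as the dual characterization ``in the sense of linear programming duality,'' so your argument supplies the standard proof it has in mind.
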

Here $\chi_S$ is the parity function $\chi_S(x) = \prod_{i \in S}x_i$.

\section{Lower Bounds}
\subsection{Hard Distribution}
\label{sec:harddist}
Let $q : \bools^k \rightarrow \Rbb$ be any distribution under which $f$ is
uncorrelated with all its degree $<d$ parities. If $f$ is a function of
sign-degree $\ge d$ this is guaranteed to exist by the dual definition of
sign-degree. We define \[
Q(x) = 2^{(\alpha-1)n}\prod_{i=1}^{\alpha n/k}q(x^i).
\]
Note that $Q$ depends on $\sigma$. Our hard distribution will be:
\begin{itemize}
\item $\sigma$ is sampled uniformly at random;
\item $x$ is sampled from $Q$;
\item $y$ is chosen so that either $f(x^i) =
y_i$ $\forall  i \in \brac{\alpha n/k}$ or $f(x^i) = -y_i$ $\forall i \in \brac{\alpha n/k}$, with equal
probability.
\end{itemize}

\subsection{Fourier Analysis and Hypercontractivity}

Any Boolean function $f:\{1,-1\}^k\to\mathbb{C}$ can be decomposed as
\begin{align*}
    f(x) = \sum_{U\subseteq[k]} \widehat{f}(U)\chi_U(x) \quad\text{where}\quad \chi_U(x) = \prod_{i\in U} x_i.
\end{align*}
The coefficients $\widehat{f}(U) = \frac{1}{2^k} \sum_{x\in\{1,-1\}^k} f(x) \chi_U(x)$ are called Fourier coefficients of $f$. The same decomposition is true if $f:\{1,-1\}^k \to \mathbb{C}^{k\times k}$ is defined as a map to $k\times k$ complex matrices, the difference being that the Fourier coefficients $\widehat{f}(U)$ are now also complex matrices, so that the usual Fourier decomposition (onto $\mathbb{C}$) is performed entry-wise.

We will need a slight variant of the standard KKL corollary~\cite{KKL88} used in one-way
communication complexity. We start by stating the hypercontractive inequality. For any $\delta \in \brac{0,1}$, let $T_\delta$ be the ``noise operator'' that
takes $f(x)$ to the expectation over $f(y)$ when $y$ is chosen by flipping each
bit of $x$ independently with probability $1/2 - \delta/2$. We will need the
fact that $\wh{T_\delta f}(U) = \delta^{\abs{U}} \wh{f}(U)$.
\begin{fact}[\cite{bonami1970etude}]\label{fact:bonami}
Let $f : \bools^n \rightarrow \Rbb$ and let $1 \le p \le q \le \infty$. Then
$\norm{T_\delta f}_q \le \norm{f}_p$ for $0 \le \delta \le \sqrt{\frac{p-1}{q-1}}$.
\end{fact}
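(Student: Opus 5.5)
This is the Bonami hypercontractive inequality, which the paper cites rather than proves; the standard proof runs in two stages. First I will establish the two-point inequality ($n=1$), and then lift it to $n$ bits by tensorization. For the tensorization I will use Minkowski's integral inequality, applying the operator one coordinate at a time. Before either stage, two reductions. Since $T_\delta$ has a nonnegative kernel, $|T_\delta f|\le T_\delta|f|$ pointwise, so we may assume $f\ge 0$. Since $\norm{T_\delta f}_q$ is nondecreasing in $\delta$ for $f\ge 0$ (by log-convexity of $\delta\mapsto\norm{T_\delta f}_q$ along the semigroup, or directly in the two-point case), it suffices to take $\delta=\sqrt{(p-1)/(q-1)}$. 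The cases $q=\infty$ and $p=1$ make this $\delta=0$, which is trivial because $T_0 f=\E{f}$.

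\textbf{Two-point inequality.} For $f(x)=a+bx$ with $a\ge |b|$, it suffices by homogeneity to take $a=1$ and $b=t\in[-1,1]$. The claim becomes
\[
\Big(\tfrac{|1+\delta t|^q+|1-\delta t|^q}{2}\Big)^{1/q}\le \Big(\tfrac{|1+t|^p+|1-t|^p}{2}\Big)^{1/p}.
\]
I plan to prove this first for $1\le p\le q\le 2$ or $2\le p\le q$, and then the mixed range by duality and composition. Duality uses $T_\delta^*=T_\delta$ and exchanges $(p,q)$ with $(q',p')$. Composition uses $T_{\delta_1}T_{\delta_2}=T_{\delta_1\delta_2}$, routing through exponent $2$:
\[
\sqrt{\tfrac{p-1}{q-1}}=\sqrt{\tfrac{p-1}{1}}\cdot\sqrt{\tfrac{1}{q-1}}.
\]
For the core case $p\le q\le 2$ and $|t|<1$, I will expand both sides as binomial power series in $t$. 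The left side is at most $1+\sum_{j\ge1}\binom{q}{2j}\delta^{2j}t^{2j}$, raised to the power $1/q$. I will compare it termwise with the right side using $(1+u)^{r}\le 1+ru$ for $r\le 1$ and Bernoulli-type inequalities. This coefficient comparison is the step I expect to be the main obstacle: signs of generalized binomial coefficients alternate, so it needs care. As a fallback I would first prove the $(2,q)$ case via $q$-th moments, then dualize to get $(p,2)$, then compose.

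\textbf{Tensorization.} Write $T_\delta=T^{(1)}\cdots T^{(n)}$, where $T^{(i)}$ acts on coordinate $i$ only. I will induct on $n$. Split $x=(x',x_n)$ and put $g=T^{(1)}\cdots T^{(n-1)}f$. For each fixed $x_n$, the inductive hypothesis gives
\[
\norm{g(\cdot,x_n)}_{q} \le \norm{f(\cdot,x_n)}_p .
\]
Next apply $T^{(n)}$ and use Minkowski's inequality in $L^{q/p}$, which is valid since $q/p\ge 1$. This moves the $x'$-norm in $q$ inside the $x_n$-norm, bounding the result by the $q$-norm over $x'$ of the $p$-norm over $x_n$ of $T^{(n)}$ applied to $f(x',\cdot)$. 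Then the one-dimensional inequality in $x_n$ for each fixed $x'$, followed by Fubini, yields $\norm{T_\delta f}_q\le\norm{f}_p$.
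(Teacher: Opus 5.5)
The paper gives no proof of this fact (it is cited from Bonami), so your two-stage plan is the standard textbook route. Your reductions and the two-point part are sound. The coefficient comparison you flag as the main obstacle is actually easy. For $1\le q\le 2$ every $\binom{q}{2j}$ is nonnegative, since the factors $q-2,\dots,q-2j+1$ are an even number of nonpositive terms. With $\delta^2=\frac{p-1}{q-1}$, the needed inequality $\frac{p}{q}\binom{q}{2j}\delta^{2j}\le\binom{p}{2j}$ reduces to $\bigl(\frac{p-1}{q-1}\bigr)^{j-1}\prod_{i=2}^{2j-1}(i-q)\le\prod_{i=2}^{2j-1}(i-p)$, which is immediate from $1<p\le q\le 2$. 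Two small remarks:
\begin{itemize}
\item Monotonicity in $\delta$ follows more simply from $T_{\delta'}=T_{\delta'/\delta}T_\delta$ and the $L^q$-contractivity of $T_\rho$ than from log-convexity.
\item Your fallback only yields $p\le 2\le q$, not $p<q<2$, although $q=2$, $p=1+\delta$ is the only case the paper uses.
\end{itemize}

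The genuine error is in the tensorization. You assert that $\|T_\delta f\|_q$ is bounded by the $q$-norm over $x'$ of the $p$-norm over $x_n$ of $T^{(n)}f(x',\cdot)$. That is false: for $f=1+x_n$ it says $\|1+\delta x_n\|_q\le\|1+\delta x_n\|_p$. Even if it were granted, two things would still fail. The one-dimensional inequality cannot act on an inner $p$-norm of $T^{(n)}f$. And an outer $L^q(x')$-norm can never be collapsed by Fubini into $\|f\|_p$. Minkowski only lets you move the larger exponent inside, so the steps must be ordered so that the outer norm ends up at exponent $p$. With your inductive hypothesis on $g$, the correct chain is
\[
\|T^{(n)}g\|_q \le \bigl\|\,\|g(x',\cdot)\|_{L^p(x_n)}\bigr\|_{L^q(x')} \le \bigl\|\,\|g(\cdot,x_n)\|_{L^q(x')}\bigr\|_{L^p(x_n)} \le \bigl\|\,\|f(\cdot,x_n)\|_{L^p(x')}\bigr\|_{L^p(x_n)} = \|f\|_p .
\]
The three inequalities use, in turn:
\begin{itemize}
\item the one-dimensional inequality applied to $g(x',\cdot)$ at each fixed $x'$;
\item Minkowski in $L^{q/p}$;
\item the inductive hypothesis at each fixed $x_n$.
\end{itemize}
Equivalently, you can apply the hypothesis to $T^{(n)}f$ first (using that the coordinate operators commute), then Minkowski, then the one-dimensional inequality. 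The intermediate quantity is then the $p$-norm over $x'$ of the $q$-norm over $x_n$ of $T^{(n)}f(x',\cdot)$, the reverse of what you wrote.
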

\begin{lemma}
\label{lm:genkkl}
For any $m \in \Nbb_{>0}$, let $\nu : \bools^m \rightarrow \brac{0,1}$ be a
function with $\E[x]{\nu(x)} = \Qc > 0$ and $C = -\log \Qc$. Then \[ \Qc^{-2}
\sum_{U \subseteq \brac{m} : \abs{U} = r} \wh{\nu}(U)^2 \le
\paren*{\frac{4C}{r}}^r \quad\text{for all integers $r$ such that } 1 \le r \le C.
\]
\end{lemma}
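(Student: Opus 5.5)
We follow the standard KKL argument, applying \cref{fact:bonami} with $q=2$ and a carefully chosen $p\in(1,2]$. Set $\delta=\sqrt{p-1}$. Since $\wh{T_\delta \nu}(U)=\delta^{\abs{U}}\wh{\nu}(U)$, Parseval gives
\[
\delta^{2r}\sum_{U:\abs{U}=r}\wh{\nu}(U)^2 \le \sum_{U}\delta^{2\abs{U}}\wh{\nu}(U)^2 = \norm{T_\delta \nu}_2^2 \le \norm{\nu}_p^2 .
\]
Because $\nu$ takes values in $[0,1]$, we have $\abs{\nu}^p\le \nu$ pointwise, so $\norm{\nu}_p^p=\E{\nu^p}\le \Qc$ and hence $\norm{\nu}_p^2\le \Qc^{2/p}$. (This is exactly where the slight generalisation from $\bool$-valued to $[0,1]$-valued $\nu$ enters, and it costs nothing.) Combining,
\[
\Qc^{-2}\sum_{\abs{U}=r}\wh{\nu}(U)^2 \le (p-1)^{-r}\,\Qc^{2/p-2}.
\]

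It remains to choose $p$. Write $p=1+\eta$ with $\eta=r/C\in(0,1]$; this is valid precisely because $1\le r\le C$. Then $(p-1)^{-r}=(C/r)^r$. For the second factor,
\[
\Qc^{2/p-2}=\exp\!\big(C(2-2/p)\big)=\exp\!\Big(\frac{2C\eta}{1+\eta}\Big)\le e^{2C\eta}=e^{2r}.
\]
So the whole expression is at most $(e^2C/r)^r$, which gives the stated bound $(4C/r)^r$ only if the constant is tightened. Here we use the more careful estimate $\frac{2\eta}{1+\eta}$ together with the freedom to take $\eta=\lambda r/C$ for a constant $\lambda$. This yields a bound of the form $\big(\frac{C}{\lambda r}\,e^{2\lambda/(1+\lambda r/C)}\big)^r$, and we would optimise over $\lambda$ subject to $\lambda r/C\le 1$.

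The main obstacle is obtaining the constant $4$ exactly, since the crude choice gives $e^2\approx 7.39$. With $\lambda=1/2$ one gets $2e\approx 5.44$, which is still too large for an unrestricted $\eta$. To reach $4$, we would exploit the term $1+\eta$ in the denominator and the interpretation of $\log$ (if it is base $2$, then $\Qc=2^{-C}$ and the exponent becomes $C\ln 2\,(2-2/p)$, improving the constant to $2e\ln 2\approx 3.77<4$ with $\lambda=1/2$). We would fix the logarithm base to $2$, which is consistent with $C=\bO{c}$ for $c$-bit messages, and check that the choice $\eta=r/(2C)\le 1/2$ stays in range for $1\le r\le C$. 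If natural log is intended, a slightly larger constant suffices for all later uses, so the argument is robust either way.
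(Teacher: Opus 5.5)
Your argument is correct and is essentially the paper's proof: Bonami with $q=2$ and $p=1+\delta$, the pointwise bound $\nu^{1+\delta}\le\nu$, and $\log$ read in base~$2$ so that $\Qc^{-2}=2^{2C}$. Base~$2$ is actually required for the constant~$4$ (with natural log, large Hamming balls have level-$r$ weight about $\Qc^2(2C)^r/r!$, which exceeds $\Qc^2(4C/r)^r$ once $r$ is moderately large and $C\gg r^2$), and your final bookkeeping is slightly off: in base~$2$ the original choice $\eta=r/C$ already gives $2^{2C\eta/(1+\eta)}\le 2^{2r}=4^r$, exactly as the paper does, and $\eta=r/(2C)$ gives exactly $(4C/r)^r$ too, so no optimisation over $\lambda$ is needed and the constant $2e\ln 2$ arises only at $\eta=r/(2C\ln 2)$.
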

\begin{proof}
Set $\delta = r/C$. Then
\begin{align*}
\Qc^{-2} \sum_{U \subseteq \brac{m}: \abs{U} = r}  \wh{\nu}(U)^2
&= \frac{1}{\delta^r}\Qc^{-2} \sum_{U \subseteq \brac{m}: \abs{U} =
r} \wh{T_{\sqrt{\delta}}\nu}(U)^2\\
&= \frac{1}{\delta^{r}} 2^{2C}\sum_{U \subseteq \brac{m} :
\abs{U} = r} \wh{T_{\sqrt{\delta}} \nu}(U)^2\\
&\le \frac{1}{\delta^{r}} 2^{2C}\norm{T_{\sqrt{\delta}} \nu}_2^2\\
&\le \frac{1}{\delta^{r}} 2^{2C}\norm{\nu}_{1 + \delta}^2 \tag{by \Cref{fact:bonami}}\\
&\le \frac{1}{\delta^{r}} 2^{2C}\norm{\nu}_1^{\frac{2}{1 + \delta}} \tag{$\nu(x)\in[0,1]$}\\
&= \frac{1}{\delta^{r}} 2^{2C(1 - \frac{1}{1 + \delta})}\\
&\le \paren*{\frac{4C}{r}}^r.\qedhere
\end{align*}
\end{proof}

We will also need the analogue of the hypercontractivity theorem for matrix-valued
functions. Specifically, we use the following corollary:
\begin{theorem}[{\cite[Lemma~6]{BARdW08}}]
\label{lm:mathyp}
For every $g : \bools^n \rightarrow \Cbb^{2^c \times 2^c}$, $c \in \Nbb_{>0}$, and
$\delta \in \brac{0,1}$ such that $g(x)$ is a density matrix for every $x$, \[
\sum_{U \subseteq \brac{n}} \delta^{\abs{U}} \norm{\wh{g}(U)}_1^2 \le 2^{2\delta c}.
\]
\end{theorem}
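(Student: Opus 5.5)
The plan is to derive this from the matrix-valued hypercontractive inequality of Ben-Aroya, Regev and de Wolf, in the same way \cref{lm:genkkl} was derived from \cref{fact:bonami}, with a dimension-dependent norm comparison in place of the $[0,1]$-range argument. The base inequality I would invoke states that for every $g : \bools^n \to \Cbb^{M\times M}$ and every $1 \le p \le 2$,
\[
\paren*{\sum_{U\subseteq\brac{n}} (p-1)^{\abs{U}} \norm{\wh{g}(U)}_p^2}^{1/2} \le \paren*{\E[x]{\norm{g(x)}_p^p}}^{1/p},
\]
where the norms are Schatten $p$-norms. Its proof in \cite{BARdW08} goes through induction on $n$ using a matrix two-point inequality derived from Ball--Carlen--Lieb uniform convexity. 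I would take this as the black box.

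The steps, in order, are as follows.
\begin{enumerate}
\item Set $p = 1+\delta \in [1,2]$, so that $(p-1)^{\abs{U}} = \delta^{\abs{U}}$.
\item Bound the right-hand side. Each $g(x)$ is a density matrix, hence positive semidefinite with trace $1$. Its eigenvalues therefore lie in $[0,1]$ and sum to $1$, which gives $\norm{g(x)}_p \le \norm{g(x)}_1 = 1$. So the right-hand side is at most $1$.
\item Pass from Schatten $p$-norms back to trace norms on the left-hand side. Since $M = 2^c$, H\"older's inequality on the singular values gives $\norm{A}_1 \le M^{1-1/p}\norm{A}_p$. Hence
\[
\norm{\wh{g}(U)}_1^2 \le 2^{2c(1-1/p)} \norm{\wh{g}(U)}_p^2 = 2^{2c\delta/(1+\delta)}\norm{\wh{g}(U)}_p^2 .
\]
\item Combine the two bounds:
\[
\sum_U \delta^{\abs{U}}\norm{\wh{g}(U)}_1^2 \le 2^{2c\delta/(1+\delta)} \le 2^{2\delta c}.
\]
\end{enumerate}

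The only genuinely hard ingredient is the matrix hypercontractive inequality itself. Everything after it is routine norm bookkeeping, and the main thing to check is that $p=1+\delta$ stays in the allowed range $[1,2]$, which holds since $\delta\in\brac{0,1}$. If I had to reprove the base inequality, I would first establish the $n=1$ case $\paren*{\norm{A}_p^2+(p-1)\norm{B}_p^2}^{1/2} \le \paren*{\frac{\norm{A+B}_p^p+\norm{A-B}_p^p}{2}}^{1/p}$ via Ball--Carlen--Lieb. I would then tensorize by induction on coordinates, using Minkowski's inequality in $L_{2/p}$ to swap the order of the norms; this tensorization step is where the care is needed.
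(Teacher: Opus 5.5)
Your argument is correct and matches the cited source. The paper gives no proof of its own and imports the statement as Lemma~6 of \cite{BARdW08}, where it is derived from the main matrix hypercontractive inequality just as you do: take $p=1+\delta$, use $\|g(x)\|_p\le\|g(x)\|_1=1$ for density matrices, and convert norms via $\|A\|_1\le 2^{c(1-1/p)}\|A\|_p$, which yields the slightly stronger bound $2^{2c\delta/(1+\delta)}\le 2^{2\delta c}$.
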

By taking $\delta = 1/c$ we immediately obtain \[
\sum_{U \subseteq \brac{n}:\abs{U} = r} \norm{\wh{g}(U)}_1^2 \le 4c^r.
\]

\subsection{Classical Lower Bound}
In this section we prove our classical lower bound. 
\begin{theorem}
\label{thm:classlb}
For any $\alpha \in (0,1)$, $d, k, n \in \Nbb$ with $d \le k | \alpha n$, let $f
: \bools^k \rightarrow \bools$ and $q : \bools^k \rightarrow \brac{0,1}$ be a
function and probability distribution such that $\E[x \sim q]{f \chi_{S}} = 0$
whenever $\abs{S} < d$. Let Alice and Bob's inputs to an $\bhp_{k,n}^\alpha$ be
given by the hard distribution from {\rm\cref{sec:harddist}}. Then there is a
constant $D > 0$, depending only on $f$ and $\alpha$, such that if Alice sends
a classical message of fewer than $\eps^{2/d}n^{1-1/d}/D$ bits the probability
that Bob correctly guesses whether their input was a \yes{} or a \no{} instance
is bounded by \[ 
\frac{1}{2} + \eps.
\]
\end{theorem}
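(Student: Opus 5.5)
We follow the outline from the overview and first reduce to a deterministic protocol. Since the hard distribution is fixed, we fix the random seed that maximizes the success probability. Alice's message then partitions $\bools^n$ into at most $2^c$ sets, with indicators $m$. Bob sees $(m,\sigma,y)$. His advantage is at most the total variation distance between the distributions of $(m,\sigma,y)$ in the \yes{} and \no{} cases. Equivalently, since $y$ equals $\pm(f(x^i))_i$, it is bounded by the distance between the law of $(f(x^i))_i$ given $(m,\sigma)$ and the law of its negation.

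We bound this by Pinsker's inequality together with the chain rule for relative entropy. The sequence $(f(x^i))_i$ is revealed one block at a time, and at step $i$ we condition on $\Fc_z$, the event that $f(x^j)=z_j$ for all $j<i$. Because $\E[q]{f}=0$, each $f(x^i)$ is a priori uniform and independent under $Q$. The divergence contributed at step $i$ is then $\bO{\E{\E[x\mid \Fc_z,m,\sigma]{f(x^i)}^2}}$. The only caveat is posteriors near $\pm1$, which we handle by using $\chi^2$-divergence, or by noting that $\mathrm{KL}(\mathrm{Ber}(\tfrac{1+b}2)\|\mathrm{Ber}(\tfrac12))\le b^2$. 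The advantage is therefore at most the square root of $\bO{\sum_i \E{\text{posterior}^2}}$. It thus suffices to prove a per-block bound (the paper's \cref{lm:fbounds}) of $\bO{(c/n)^d}$ on this conditional squared posterior, averaged over $\sigma,m,z$. Summing over the $\alpha n/k$ blocks gives advantage $\bO{\sqrt{n(c/n)^d}}$. This is at most $\eps$ once $c\le \eps^{2/d}n^{1-1/d}/D$.

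For the per-block bound we fix $\sigma$ off block $i$ and average $m$ over the other blocks' inputs consistent with $\Fc_z$. This gives $\nu:\bools^S\to[0,1]$ with $|S|=(1-\alpha)n+k$, on which $Q$ is uniform except for the factor $q$ on block $i$, whose location $\sigma(B_i)$ is a uniformly random $k$-subset of $S$. Let $\Qc=\E{\nu}$ and $\Nc=\E{q(x^i)\nu}$. Expanding $qf$ in the Fourier basis, only sets of size $\ge d$ survive, so the squared posterior is at most $\bO{\Nc^{-2}}\sum_{|U|\ge d}\wh\nu(\sigma(U))^2$. By \cref{lm:genkkl}, applied with $C=-\log\Qc$, and by the fact that a fixed $r$-set of $[k]$ lands on a uniformly random $r$-subset of $S$, this is $\Qc^2\,\bO{(C/n)^{r}}$ in expectation over the random placement. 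We need $C\ge r$; if $C<r$, the trivial bound $\sum\wh\nu^2\le\Qc$ is already fine. The message $m$ occurs with probability $2^k\Nc$. Weighting by this, the expected squared posterior is $\sum_m 2^k\Nc\cdot(\Qc/\Nc)^2\bO{(C/n)^d}$.

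The main obstacle is the factor $\Qc/\Nc$, which can be unbounded when $q$ lacks full support. We want to reduce it to a single power, because $\sum_m\Qc=2^{k}\sum_m\Nc$-type normalizations then collapse the sum. Since $\sum_m \Qc\le 1$ over the messages of the partition, we would get $\sum_m \Qc\,\bO{(C_m/n)^d}$. The concavity of $\Qc\log^d(1/\Qc)$ then bounds this by $\bO{(c/n)^d}$ using $2^c$ messages. This is the recursive lemma (\cref{lm:recbound}). We write $\Nc=2^{-k}\Qc+\sum_{U\ne\emptyset}\wh q(U)\wh\nu(\sigma(U))$. If every nonempty $U$ has $|\wh\nu(\sigma(U))|$ small relative to $\Qc$, then $\Qc/\Nc\le 2^{k+1}$ and we are done. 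Otherwise some $U$ with $|U|=r<d$ has a large coefficient, an event of probability $\bO{(C/n)^r}$. On that event we condition on $\sigma|_U$ and decompose according to the values $t\in\bools^U$ on $\sigma(U)$. The restrictions $q_t f_t$ are orthogonal to parities of degree $<d-r$, so we can induct on $d$. The induction should give an expected weighted posterior $\Nc\cdot\text{posterior}^2\le\Qc\,\bO{(C/n)^{d-r}}$, and multiplying by the probability $\bO{(C/n)^r}$ of the event yields the required $\Qc\,\bO{(C/n)^d}$. The delicate points will be making the weighted sum over $t$ respect the correct normalization, and the case $U$ with $|U|\ge d$, which is controlled directly.
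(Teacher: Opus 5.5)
Your proposal is correct and follows essentially the paper's proof: derandomize, apply Pinsker and the chain rule with the $\chi^2$ bound to reduce to the conditional per-block squared posterior, then condition away the other blocks to get $\nu$ with message probability $2^k\Nc$. The recursive \cref{lm:recbound} then splits on whether $\Nc<\Qc/2^{k+1}$ and recurses on the restricted $q_t,f_t$ after conditioning on $\sigma|_U$, and $\Qc\,\bO{(C+1)^d}$ is finally summed over messages. The only differences are cosmetic: you phrase the recursion as induction on $d$, whereas the paper's restricted instance lowers $k$, $d$ and $m$ together by $\abs{U}$; and you use a Jensen/concavity bound on $x\log^d(1/x)$ where the paper buckets $\mu(m)$ dyadically.
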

We will use the hard distribution from \cref{sec:harddist}. We may assume
Alice's message is a deterministic function of her input, as for a known
distribution any randomized protocol can be derandomized by choosing the random
seed that maximizes its success probability over the distribution. Let $c$ be
the size of the message Alice sends, and let $m : \bools^n \rightarrow \bool$
be the indicator function on whether a string is consistent with Alice's
message. Note that $m$ is a random variable depending on Alice's input, which
is in turn sampled from $Q$, which depends on $\sigma$.

Let $p$ be Bob's posterior distribution on Alice's input, before he looks at
$y$ (but after seeing $m$ and $\sigma$). Let $F$ be the resulting posterior on
$(f(x^i))_{i = 1}^{\alpha n/k}$. After seeing $y$, Bob's best strategy is to
guess \yes{} if $F(y) \ge F(\overline{y})$ and \no{} otherwise (guessing if
they are the same), which will succeed iff the true value of $(f(x^i))_{i =
1}^{\alpha n/k}$ was more likely than its complement (or 1/2 if they are the
same), which happens with probability
\begin{align*}
\sum_{z \in \bools^{\alpha n/k}} F(z) \left( \mathbf{1}[F(z) > F(\overline{z})] + \frac{1}{2}\mathbf{1}[F(z) = F(\overline{z})]\right) &= \frac{1}{2} + \frac{1}{4}\sum_{z \in \bools^{\alpha
n/k}} \abs{F(z) - F(\overline{z})}\\
&\leq \frac{1}{2} + \norm{F - \Uc}_{\text{TV}}.
\end{align*}
It will therefore suffice to bound the distance of $F$ from uniformity. By
Pinsker's inequality and the chain rule for the KL divergence,
\begin{align*}
 \norm{F - \Uc}_{\text{TV}} &\le \sqrt{\frac{1}{2}D_{\text{KL}}(F || \Uc)}\\
 &\le \sqrt{\frac{1}{2}\sum_{i=1}^{\alpha n/k} \E[(F_j)_{j =
 1}^{i-1}]{D_{\text{KL}}(F_i|_{(F_j)_{j = 1}^{i-1}} || \Uc)}}\\
 &\le \sqrt{\frac{1}{2}\sum_{i=1}^{\alpha n/k} \E[(F_j)_{j =
  1}^{i-1}]{\E[b \sim F_i| (F_j)_{j = 1}^{i-1}]{b}^2}},
\end{align*}
where the last line comes from $D_{\text{KL}}(F_i|_{(F_j)_{j = 1}^{i-1}} || \Uc) \leq D_{\chi^2}(F_i|_{(F_j)_{j = 1}^{i-1}} || \Uc) = \E[b \sim F_i| (F_j)_{j = 1}^{i-1}]{b}^2$, where $D_{\chi^2}(q_1\|q_2):=\sum_x \frac{(q_1(x)-q_2(x))^2}{q_2(x)}$ is the chi-square divergence.
Then, letting $\Fc^i_z$ denote the event that $f(x^j) =
z_j$ for all $j \in \brac{i-1}$, 
\begin{align*}
  \E[\sigma, m]{\norm{F - \Uc}_{\text{TV}}} &\le \sqrt{\frac{1}{2}\sum_{i=1}^{\alpha n/k} \E[(F_j)_{j = 1}^{i-1}]{\E[m,\sigma | (F_j)_{j = 1}^{i-1}]{\E[b \sim F_i| m, \sigma, (F_j)_{j = 1}^{i-1}]{b}^2}}} \tag{Jensen's inequality}\\
    &\le \sqrt{\frac{1}{2}\sum_{i=1}^{\alpha n/k} \max_{z \in \bools^{\alpha
    n/k}}\E[m,\sigma | \Fc_z^i]{\E[b \sim F_i| m, \sigma, \Fc_z^i]{b}^2}}\\
    &= \sqrt{\frac{1}{2}\sum_{i=1}^{\alpha n/k} \max_{z \in \bools^{\alpha
    n/k}}\E[m,\sigma | \Fc_z^i]{{\E[x \sim p | \Fc^i_z]{f(x^i)}^2}}}.
\end{align*}
So it will suffice to prove that, if $c \le \eps^{2/d}n^{1-1/d}/D$, then
\[
\E[\sigma, m | \Fc^i_z]{\E[x \sim p | \Fc^i_z]{f(x^i)}^2 } \le 2\eps^2/n
\quad\text{for all } i \in \brac{\alpha n/k}, z \in \bools^{\alpha n/k}.
\]
This is formally done in the next result.
\begin{lemma}
\label{lm:fbounds}
For all $i \in \brac{\alpha n/k}$ and all $z \in \bools^{i-1}$, let $\Fc_z^i$ be the event that $f(x^j) = z_j$ for every $j \in \brac{i-1}$. Then \[
\E[\sigma, m | \Fc_z^i]{\E[x \sim p | \Fc_z^i]{f(x^i)}^2} \le \bO{\paren*{\frac{c}{{n}}}^d}
\]
where $c$ is the number of bits Alice sends.
\end{lemma}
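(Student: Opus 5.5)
Fix $i$ and $z$, and condition on the restriction $\sigma|_{\text{out}}$ of $\sigma$ to the blocks $B_j$, $j\neq i$. Let $S$ be the remaining $(1-\alpha)n+k$ coordinates; conditioned on $\sigma|_{\text{out}}$, the image $\sigma(B_i)$ is a uniformly random ordered $k$-subset of $S$. For each message $m$, define $\nu_m:\bools^S\to[0,1]$ as the average of $m(x)$ over the inputs on the blocks $j\neq i$ drawn from $\prod_{j\ne i} q$, restricted to the event $\Fc_z^i$. Since $x$ restricted to $S$ is independent of the other blocks, this conditioning does not change its law. The $\nu_m$ sum to $1$ pointwise, since messages partition $\bools^n$. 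With $\Qc_m=\E[u]{\nu_m(u)}$ and $\Nc_m=\E[u]{2^kq(u^i)\nu_m(u)}\cdot 2^{-k}$, the posterior is
\[
\E[x\sim p\mid\Fc_z^i]{f(x^i)}=\frac{\E[u]{q(u^i)f(u^i)\nu_m(u)}}{\Nc_m}
=\frac{\sum_{U\subseteq[k],|U|\ge d}\wh{qf}(U)\,2^k\wh{\nu_m}(\sigma(U))}{\Nc_m},
\]
where only $|U|\ge d$ survives by the dual property. Message $m$ occurs with probability $2^k\Nc_m$. The target quantity is therefore $\E[\sigma]{\sum_m 2^k\Nc_m\cdot\text{posterior}^2}$, which is at most $\bO{1}\,\E[\sigma]{\sum_m \Nc_m^{-1}\sum_{|U|\ge d}\wh{\nu_m}(\sigma(U))^2}$.

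If $\Nc_m\ge 2^{-k-1}\Qc_m$, this term is at most $\bO{\Qc_m^{-1}}\sum_{|U|\ge d}\wh{\nu_m}(\sigma(U))^2$. Because $\sigma(U)$ is a uniform $|U|$-subset of $S$ with $|S|=\Theta(n)$, its expectation over $\sigma$ is $\bO{\Qc_m}\binom{|S|}{r}^{-1}\Qc_m^{-2}\sum_{|V|=r}\wh{\nu_m}(V)^2\le \bO{\Qc_m(C_m/n)^r}$ by \cref{lm:genkkl}, with $C_m=-\log\Qc_m$. For $r>C_m$, use Parseval instead: the sum is at most $\Qc_m$, and $\binom{n}{r}^{-1}$ is tiny. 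Summing over $m$, $\sum_m\Qc_m=1$. I would handle messages with $C_m\gg c$ by a standard Markov/convexity step: $\sum_m\Qc_m(\log 1/\Qc_m)^d$ is maximized near uniform over the $2^c$ messages, giving $\bO{c^d}$. This gives $\bO{(c/n)^d}$, with $r=d$ dominating.

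The main obstacle is the bad event $\Nc_m<2^{-k-1}\Qc_m$. Expanding $\Nc_m=2^{-k}\Qc_m+\sum_{U\ne\emptyset}\wh q(U)\wh{\nu_m}(\sigma(U))$ shows that in this case some nonempty $U$ with $|U|=r$ has $\wh{\nu_m}(\sigma(U))^2\ge\Omega_k(\Qc_m^2)$, which by the same averaging has probability $\bO{(C_m/n)^r}$. That is too large when $r<d$, so I would prove a recursive lemma (the \cref{lm:recbound} announced in the overview). The claim is that for any $q,f$ with $\E[q]{f\chi_V}=0$ for $|V|<d'$ and any $[0,1]$-valued $\nu$, the expectation over the random placement of $\sum_m\Nc_m\cdot\text{posterior}^2$ is at most $\bO{\Qc(C/n)^{d'}}$. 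The argument is induction on $d'$. On the bad event for $U$, reveal $\sigma|_U$ and decompose by the value $t\in\bools^U$ of $x_{\sigma(U)}$, writing $\nu=\sum_t\mathbf 1[x_{\sigma(U)}=t]\nu_t$. The restrictions $q_t,f_t$ on $[k]\setminus U$ satisfy the orthogonality for degrees $<d'-r$, and the identity $\Nc=\sum_t\Nc_t$ together with Cauchy--Schwarz (convexity of $a^2/b$) gives $\text{posterior}^2\,\Nc\le\sum_t\text{posterior}_t^2\,\Nc_t$. By induction each $t$-term contributes $\bO{\Qc_t(C/n)^{d'-r}}$. Multiplying by the $\bO{(C/n)^r}$ probability of the bad event and using $\sum_t\Qc_t\le 2^r\Qc$ gives $\bO{\Qc(C/n)^{d'}}$. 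The delicate points are that the bad event depends on $\sigma|_U$ while the inductive bound uses the randomness of the remaining placement $\sigma'$, and that $C_t$ could exceed $C$. I would address both by conditioning first on $\sigma|_U$ and absorbing $\log(\Qc/\Qc_t)\le r$ into constants. Summing over messages as above then finishes the proof.
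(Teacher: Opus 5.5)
Your outline is the paper's proof. You condition on $\sigma$ outside block $i$ to get $[0,1]$-valued $\nu_m$ with $\sum_m\nu_m\equiv1$, and use $\Pr[m\mid\sigma,\Fc_z^i]=2^k\Nc_m$. You split on $\Nc_m<2^{-k-1}\Qc_m$ and handle the good case with \cref{lm:genkkl}. For the bad case you use a recursive lemma that is exactly \cref{lm:recbound}: reveal $\sigma|_U$, split according to $x_{\sigma(U)}=t$, apply Cauchy--Schwarz, and induct on the restricted $q_t,f_t,\nu_t$. You finish with $\sum_m\Qc_m\log^d(1/\Qc_m)=O(c^d)$.

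The one step that fails as written is your treatment of $C_t>C$. The inequality $\log(\Qc/\Qc_t)\le r$ is backwards. With $\Qc_t$ the mean of $\nu_t$ over the remaining coordinates, $\sum_t\Qc_t=2^r\Qc$ only gives $\Qc_t\le 2^r\Qc$, i.e.\ $C_t\ge C-r$. Nothing bounds $\Qc_t$ from below. For instance, take $U=\{u\}$ and $\nu(x)$ equal to $1$ when $x_{\sigma(u)}=1$ and to $\epsilon$ otherwise; then $\Qc=(1+\epsilon)/2$ but $\Qc_{-1}=\epsilon$. So $C_t-C$ is unbounded and cannot be absorbed into constants, and there is no pointwise comparison of $(C_t+1)^{d'-r}$ with $(C+1)^{d'-r}$.

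What rescues the recursion is that the inductive bound for each $t$ carries the weight $\Qc_t$, so only a weighted average of $C_t$ matters. Write $\Qc_t=2^r\Qc\,\mu(t)$ with $\mu$ a probability distribution on $\bools^U$. Then $C_t=C-r+\log(1/\mu(t))$, hence
\[
\sum_t\Qc_t\,(C_t+1)^{d'-r}\le 2^r\Qc\;\mathbb{E}_{t\sim\mu}\Bigl[\bigl(C+1+\log(1/\mu(t))\bigr)^{d'-r}\Bigr]=O\bigl(\Qc\,(C+1)^{d'-r}\bigr).
\]
This holds because $\mathbb{E}_{t\sim\mu}[\log^j(1/\mu(t))]=O_{r,j}(1)$ for any $\mu$ supported on $2^r$ points. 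Terms with tiny $\Qc_t$ are harmless because $\Qc_t\log^j(1/\Qc_t)\to 0$. This is the same entropy-type estimate you already use for the sum over messages, now applied to the $2^r$ restrictions, and it is exactly how the paper closes this step.

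Relatedly, state the recursive lemma with $C+1$ in place of $C$, as in \cref{lm:recbound}. The hypothesis is applied to restrictions whose $C_t$ may be close to $0$, where a bound $O(\Qc(C/n)^{d'})$ can fail; your Parseval remark only covers the top-level good case. With these repairs your argument coincides with the paper's.
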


Given the above result, Bob's advantage over random guessing is bounded by \[
\sqrt{\sum_{i=1}^{\alpha n/k} \bO{\paren*{\frac{c}{n}}^d}}
\]
and so as long as $c \le \eps^{2/d} n^{1 - 1/d}/D$ for some constant $D$, the
advantage will be at most $\eps$, proving \cref{thm:classlb}.

\subsubsection{Proof of \texorpdfstring{\Cref{lm:fbounds}}{Lemma 8}}
We now prove \Cref{lm:fbounds}, which gives the necessary bound on the squared conditional expectation of $f(x^i)$, conditioned on $(f(x^j))_{j = 1}^{i-1}$. Let \[
B_{\overline{i}} = \bigcup_{j \in \brac{\alpha n/k} \setminus \set{i}} B_j
\] and let $\sigma_{\overline{i}} = \sigma|_{B_{\overline{i}}}$. We will start
by bounding \[
 \E[\sigma | m, \sigma_{\overline{i}}, \Fc_z^i]{\E[x \sim p | \Fc_z^i]{f(x^i)}^2}.
\]
Let $S_i = \brac{n} \setminus \sigma(B_{\overline{i}})$ and let $\nu_i : \bools^{S_i} \rightarrow \brac{0,1}$ be given by \[
\nu_i(x) = 2^{i-1}\sum_{y \in \bools^{\sigma(B_{\overline{i}})}} m(x
\cdot y) Q_{\overline{i}}(x \cdot y)\Fc_z^i(x \cdot y),
\]
where we abuse notation and let $\Fc_z^i : \bools^n \rightarrow \bool$ be the indicator function on $x$ causing $\Fc_z^i$ to hold (recall $\Fc_z^i$ is the event
that $f(x^j) = z_j$ for every $j \in \brac{i-1}$) and \[
Q_{\overline{i}}(x) = 
\prod_{j \in \brac{\alpha n/k} \setminus \set{i}}q(x^j),
\]
noting that $Q_{\overline{i}}(x \cdot y)$ depends only on $y$ and is
independent of $\sigma$ conditioned on $\sigma_{\overline{i}}$. This definition
of $\nu_i$ is a valid $\brac{0,1}$-bounded function because $f$ is balanced over
$q$ and therefore
\begin{align*}
\sum_{y \in \bools^{\sigma(B_{\overline{i}})}} Q_{\overline{i}}(x \cdot
y)\Fc_z^i(x\cdot y) = \prod_{j\in[i-1]}\sum_{y\in\{1,-1\}^{k}} q(y^j)\frac{1+f(y^j)z_j}{2} = 2^{1-i} \qquad\forall x \in \bools^{S_i}.
\end{align*}
Note that for any $x\in\bools^{S_i}$ and $y\in\bools^{\sigma(B_{\overline{i}})}$, $(x \cdot y)^i$ depends only
on $x$, so we extend our definition of $x^i$ to $\bools^{S_i}$ by $x^i = (x
\cdot 1^{B_{\overline{i}}})^i$.  Now letting $\Nc = \E[x]{q(x^i)\nu_i(x)}$, we
have
\begin{align*}
\prob[x' \sim p | \Fc_z^i]{x'_{S_i} = x} = \frac{\sum_{t \in
\bools^{\sigma(B_{\overline{i}})}}Q(t\cdot x)m(t \cdot x)\Fc_z^i(t \cdot x)}{\sum_{t \in
\bools^n}Q(t)m(t)\Fc_z^i(t)} = \frac{q(x^i)\nu_i(x)}{2^{\abs{S_i}} \Nc}.
\end{align*}
Now, $\sigma$ is not independent of
$m$, so we will need to rewrite the expression. The
probability that Alice sends message $m$ conditioned on $\Fc_z^i$ and
$\sigma$
is \[ \prob{m | \sigma, \Fc_z^i} = 
\frac{\sum_{x \in \bools^n} Q(x) m(x)
\Fc_z^i(x)}{\sum_{x \in \bools^n} Q(x) \Fc_z^i(x)} = 2^k\Nc
\]
by the definition of $\Nc$ and again using the fact that $f$ is balanced under $q$. So
\begin{align*}
\E[\sigma | m, \sigma_{\overline{i}}, \Fc_z^i]{\E[x \sim p | \Fc_z^i]{f(x^i)}^2} &=
\sum_\tau \prob{\sigma = \tau | m, \sigma_{\overline{i}}, \Fc_z^i} \left.\E[x
\sim p | \Fc_z^i]{f(x^i)}^2\right|_{\sigma = \tau}\\
&= \sum_\tau \prob{\sigma = \tau | \sigma_{\overline{i}}, \Fc_z^i} \frac{\prob{m | \sigma = \tau, \sigma_{\overline{i}}, \Fc_z^i}}{\prob{m | \sigma_{\overline{i}}, \Fc_z^i}}\left.\E[x
\sim p | \Fc_z^i]{f(x^i)}^2\right|_{\sigma = \tau}\\
&= 2^k p_m^{-1} \E[\sigma | \sigma_{\overline{i}}, \Fc_z^i]{\Nc \E[x
\sim p | \Fc_z^i]{f(x^i)}^2},
\end{align*}
where $p_m$ is the probability of Alice sending $m$ conditioned on
$\sigma_{\overline{i}}, \Fc_z^i$ (if this probability is $0$ then we define $\E[x
\sim p | \Fc_z^i]{f(x^i)}^2 = 0$). In order to bound the above right-hand side, we shall employ the following technical result whose proof is deferred to the end of the section.
\begin{lemma}
\label{lm:recbound}
Let $m \ge k$ be integers. Let  $f:\bools^k \rightarrow
\bools$ be a function and $q:\bools^k\to[0,1]$ be a probability distribution such that $\wh{fq}(U) = 0$ for all $U$ with $\abs{U} < d$. Let $\pi:\brac{k} \rightarrow \brac{m}$ be a random injection and let $x^1 = x^\pi_{\brac{k}}$. Let $\nu : \bools^m
\rightarrow \brac{0,1}$ and let $\Qc = \E[x]{\nu(x)}$, $C = -\log \Qc$, $\Nc =
\E[x]{q(x^1)\nu(x)}$.  If $p$ is $0$ when $\Nc = 0$ and the distribution
on $\bools^m$ given by
\[ 
p(x) = \frac{q(x^1)\nu(x)}{2^{m} \Nc},
\]
then \[
\E[\pi]{\Nc \E[x\sim p]{f(x^1)}^2} \le \bO{\Qc\paren*{\frac{C+1}{m}}^d}
\]
where the constant depends on $k$ but nothing else.
\end{lemma}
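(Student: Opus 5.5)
We will prove \cref{lm:recbound} by induction on $d$, following the recursion sketched in the proof overview. First we rewrite the quantity in Fourier terms. Since $\E[x\sim p]{f(x^1)} = \Nc^{-1}\E[x]{(fq)(x^1)\nu(x)}$, we have $\Nc\E[x\sim p]{f(x^1)}^2 = \E[x]{(fq)(x^1)\nu(x)}^2/\Nc$. Write $A = \E[x]{(fq)(x^1)\nu(x)} = \sum_{|U|\ge d}\wh{fq}(U)\wh{\nu}(\pi(U))$. The vanishing of $\wh{fq}$ below degree $d$ is what produces only high-degree terms. Similarly, write $\Nc = 2^{-k}\Qc + \sum_{U\ne\emptyset}\wh{q}(U)\wh{\nu}(\pi(U))$. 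The base case $d\le 0$ is trivial: $A^2/\Nc \le \Nc \le \Qc$, because $q\le 1$.

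Next we split on a ``good'' event. Fix a small constant $\gamma=\gamma(k)$ and let $G$ be the event that $|\wh{\nu}(\pi(U))|\le \gamma\Qc$ for every nonempty $U\subseteq[k]$ with $|U|<d$.
\begin{itemize}
\item[(i)] If also $|\wh\nu(\pi(U))|\le\gamma\Qc$ for the nonempty $U$ with $|U|\ge d$, then $\Nc\ge 2^{-k-1}\Qc$. Cauchy--Schwarz then gives $A^2/\Nc \le \bO{\Qc^{-1}}\sum_{|U|\ge d}\wh\nu(\pi(U))^2$.
\item[(ii)] Otherwise some $|U|\ge d$ has a large coefficient. Use the crude bound $A^2/\Nc\le \Nc\le\Qc$ together with Markov's inequality: the probability is at most $\gamma^{-2}\Qc^{-2}\sum_{|U|\ge d}\E[\pi]{\wh\nu(\pi(U))^2}$.
\end{itemize}
In both subcases it remains to bound $\E[\pi]{\wh\nu(\pi(U))^2}$. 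Since $\pi(U)$ is a uniformly random $|U|$-subset of $[m]$, this equals $\binom{m}{r}^{-1}\sum_{|V|=r}\wh\nu(V)^2$ with $r=|U|$. By \cref{lm:genkkl} this is at most $\Qc^2(4C/r)^r\binom{m}{r}^{-1} = \Qc^2\,\bO{((C+1)/m)^r}$ when $r\le C$. When $r>C$, use Parseval, $\sum_V\wh\nu(V)^2\le\Qc$, and the bound $\Qc\le 2^{-C}$ to get a comparable bound. Terms with $r>d$ are only smaller, because $(C+1)/m$ is $\bO{1}$ in the relevant regime; otherwise the claimed bound exceeds $\Qc$ anyway. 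This handles $G$ with total contribution $\bO{\Qc((C+1)/m)^d}$.

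On $\neg G$ we recurse. Some $U$ with $1\le r=|U|<d$ has $|\wh\nu(\pi(U))|>\gamma\Qc$. By the same Markov argument, this has probability $\bO{((C+1)/m)^r}$ over $\pi|_U$. Condition on $\pi|_U$, so that $\pi(U)$ is fixed, and decompose by $t\in\bools^U$. Let $\nu_t$ be $\nu$ with $x_{\pi(U)}=t$ and $q_t,f_t$ the restrictions; then $\Nc=2^{-r}\sum_t\Nc_t$ and $A=2^{-r}\sum_t A_t$. By convexity of $(a,b)\mapsto a^2/b$ we get $A^2/\Nc\le 2^{-r}\sum_t A_t^2/\Nc_t$. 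Each $q_tf_t$ (with $q_t$ renormalised; the normalising constant is at most $1$ in the right direction, and $t$ outside the support contributes $0$) has $\wh{q_tf_t}(V)=0$ for $|V|<d-r$, since $\wh{fq}(U\cup V)=0$. The remaining injection $\pi|_{[k]\setminus U}$ is uniform into $[m]\setminus\pi(U)$. Induction on $d$, with $k-r$, $m-r$ and $d-r$, gives $\E{A_t^2/\Nc_t}\le \bO{\Qc_t((C_t+1)/m)^{d-r}}$. Multiplying by the $\bO{((C+1)/m)^r}$ probability yields the target $\bO{\Qc((C+1)/m)^d}$.

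The main obstacle is that $C_t=-\log\Qc_t$ can be much larger than $C$ when $\Qc_t$ is tiny. We will handle this with the concavity of $\Qc_t\log(1/\Qc_t)$ combined with $2^{-r}\sum_t\Qc_t=\Qc$. A second obstacle is decoupling the conditioning on the bad event (chosen via $\pi|_U$) from the inductive expectation over the rest of $\pi$. Our plan is to first apply a union bound over the constantly many $U$, use the indicator of the bad event as a function of $\pi|_U$ only, and then apply the inductive hypothesis pointwise, since the hypothesis holds for every $\nu_t$. We expect the powers of $(C_t+1)$ to require a Hölder-type averaging, $\sum_t\Qc_t(C_t+1)^{d-r}\le 2^r\bO{\Qc(C+r+1)^{d-r}}$, which follows from concavity of $x\log^{j}(1/x)$ near $0$ and the trivial bound elsewhere.
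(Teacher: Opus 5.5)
Your proposal is correct and follows essentially the paper's route. You split on whether some $\wh{\nu}(\pi(U))$ with $1\le |U|<d$ is $\Omega(\Qc)$. Off that event you use $\Nc=\Omega(\Qc)$ together with \cref{lm:genkkl}. On it you fix the coordinates $\pi(U)$, apply Cauchy--Schwarz over $t\in\bools^U$, recurse on the restricted $(q_t,f_t,\nu_t)$ with degree $d-|U|$, and absorb the $C_t$ via the bounded-support estimate $2^{-r}\sum_t \Qc_t (C_t+1)^{d-r}=\bO{\Qc (C+1)^{d-r}}$.

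The differences are cosmetic, and each goes through. You induct on $d$ rather than $k$. You apply Markov to $\E[\pi]{\wh{\nu}(\pi(U))^2}$ instead of counting the large coefficients. You use Parseval for $r>C$ instead of rescaling $\nu$ by $2^{-k}$.
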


Associating $\pi$ with $\sigma|_{B_i}$, and
noting that conditioning on $\sigma|_{B_{\overline{i}}}$ and $\Fc_z^i$
leaves $\sigma|_{B_i}$ uniformly random, \cref{lm:recbound} tells us that \[
 \E[\sigma | \sigma_{\overline{i}}, \Fc_z^i]{\Nc\E[x \sim p | \Fc_z^i]{f(x^i)}^2} \le 
 \bO{\Qc\paren*{\frac{C + 1}{n}}^d}
\]
where $\Qc = \E[x]{\nu_i(x)}$ and $C = -\log \Qc$. We can therefore write 
\begin{align*}
\E[\sigma, m | \sigma_{\overline{i}}, \Fc_z^i]{\E[x \sim p | \Fc_z^i]{f(x^i)}^2} &=
2^k\sum_m  \E[\sigma | \sigma_{\overline{i}}, \Fc_z^i]{\Nc\E[x \sim p |
\Fc_z^i]{f(x^i)}^2} 
\le \sum_m \left.\bO{\Qc \paren*{\frac{C + 1}{n}}^d}\right|_m
\end{align*}
and then observe that \[
\sum_m \Qc|_m = 2^{i- 1 - \abs{S_i}} \sum_{x \in \bools^n}\sum_m m(x) Q_{\overline{i}}(x)
\Fc_z^i(x) =  2^{i-1-\abs{S_i}} \sum_{x \in \bools^n} Q_{\overline{i}}(x) \Fc_z^i(x) = 1
\]
as every $x$ has exactly one $m$ such that $m(x) = 1$. So $\mu(m) = \Qc$ is a probability distribution with support $2^c$, and so \[
\sum_m \bO{\Qc \paren*{C + 1}^d} =  \sum_m \mu(m) \bO{\paren*{1 + \log
\frac{1}{\mu(m)}}^d} \le \bO{1 + \sum_m \mu(m) \log^d\frac{1}{\mu(m)}}
\]
and the sum inside the $\bO{\cdot}$ can be bounded by
\begin{align*}
c^d + \sum_{\lambda = 0}^\infty \sum_{\substack{m :\\ \mu(m) \in (2^{-c
-\lambda-1}, 2^{-c - \lambda}\rbrack}} \mu(m)\log^d \frac{1}{\mu(m)} &\le c^d +
\sum_{\lambda = 0}^\infty 2^{-\lambda}(c + \lambda + 1)^d = \bO{c^d}.
\end{align*}
So \[
\E[\sigma, m| \sigma_{\overline{i}}, \Fc_z^i]{\E[x \sim p | \Fc_z^i]{f(x^i)}^2} \le
\bO{\paren*{\frac{c}{n}}^d}
\]
and if $c = 0$, Alice sends no bits and therefore Bob's posterior on $f(x^i)$ is uniform by the independence of different blocks.
\cref{lm:fbounds} then follows by averaging over all $\sigma_{\overline{i}}$.

\subsubsection{Proof of \texorpdfstring{\Cref{lm:recbound}}{Lemma 9}}

We now prove \cref{lm:recbound}. If $C \le k$, we may replace $\nu$ with $\nu' = 2^{-k}\nu$, and a bound in that
setting will give a bound for the original $\nu$ with only constant factor
loss. On the other hand, if $C \ge m$ the bound holds trivially. So without
loss of generality we will assume $C \in \brac{k, m}$. Likewise, if $\Qc = 0$
then $\Nc = 0$ for all $\pi$ and so the lemma follows trivially, so we assume
$\Qc > 0$.

We proceed by induction on $k$. If $k = 0$ or $0 = d < k$, the result holds
trivially. Otherwise we consider two events, depending on whether $\Nc < \Qc /
2^{k+1}$. We start by considering the event where this holds. Start by noting that,
writing $q^1$ for the $m$-bit function with $q^1(x) = q(x^1)$,
\begin{align*}
\Nc &= \wh{q^1\nu}(\emptyset)\\
&= \sum_{U \subseteq \brac{k}} \wh{q}(U)\wh{\nu}(\pi(U))\\
&= \wh{q}(\emptyset)\wh{\nu}(\emptyset) + \sum_{U \subseteq
\brac{k}:\abs{U}>0}  \wh{q}(U)\wh{\nu}(\pi(U))\\
&= 2^{-k}\Qc + \sum_{U \subseteq \brac{k} : \abs{U} > 0}
\wh{q}(U)\wh{\nu}(\pi(U))\\
& \ge  2^{-k}\Qc - \norm{q}_2 \sqrt{\sum_{U \subseteq \brac{k} \setminus
\emptyset} \wh{\nu}(\pi(U))^2} \tag{Cauchy-Schwarz and Parseval's identity}\\
&\ge 2^{-k} \Qc - 2^{-k/2} \sqrt{\sum_{U \subseteq \brac{k} : \abs{U} > 0}
\wh{\nu}(\pi(U))^2} \tag{$q(x)\leq 1$},
\end{align*}
so this case can only hold when at
least one $U$ has $\wh{\nu}(\pi(U))^2 \ge 2^{-2k-2}\Qc^2$.

Let $\Ec_{U,V}$ be the event that $\pi(U) = V$. Whenever $\abs{U} = \abs{V} =
r$, we have \[
\prob{\Ec_{U,V}} = \binom{m}{r}^{-1} \le \paren*{\frac{r}{m}}^r
\]
and for all $r \ge d$ we can take \[
\E[\pi|\Ec_{U,V}]{\Nc \E[x\sim p]{f(x^1)}^2} \le \Qc
\]
trivially as $\Nc \le \Qc$. When $r < d$, we will use our inductive hypothesis
to bound the squared expectation of $f$ conditioned on values of the
$V$-co-ordinates. Then, for any $\tau : U \rightarrow \brac{m}$ such that
$\tau(U) = V$, let $\Tc_{\tau}$ be the event that $\pi|_U = \tau$. We now bound
the squared expectation of $f$ conditioned on $\Tc_{\tau}$. Define $q_U$ as the
marginal of $q$ on the indices in $U$, and for every $z \in \bools^U$ where
$q_U(z) \not = 0$, we define each of the functions from the lemma statement
restricted by the $V$-co-ordinates being $z$. Let $q_z : \bools^{\brac{k} \setminus U}
\rightarrow \brac{0,1}$, $f_z :  \bools^{\brac{k} \setminus U} \rightarrow \bools$,
$\nu_z : \bools^{\brac{m} \setminus V} \rightarrow \brac{0,1}$, $p_z : \bools^{\brac{m} \setminus V} \rightarrow \brac{0,1}$, $\Qc_z, C_z, \Nc_z \in
\Rbb$, as follows:
\begin{align*}
q_z(x) &= \prob[X \sim q | X_U = z]{X_{\brac{k} \setminus U} = x}, \\ 
f_z(x) &= f(x \cdot z), \\
\nu_z(x) &= \nu(x \cdot z'), \\
\Qc_z &= \E[x]{\nu_z(x)}, \\
C_z &= -\log \Qc_z, \\
\Nc_z &= \E[x]{q_z(x^1)\nu_z(x)}, \\
p_z(x) &= \prob[X \sim p| X^\pi_U = z]{X_{\brac{m} \setminus V} = x} =
\frac{q_z(x^1)\nu_z(x)}{2^{m-\abs{U}}\Nc_z},
\end{align*}
where $z'$ is the string in $\bools^V$ with $z_i = z'_{\pi(i)}$ and we extend
our definition of $x^1$ so that for $x \in \bools^{\brac{m} \setminus V}$ it is
$x^{\pi}_{\brac{k} \setminus U}$. Now, for all $W \subseteq \brac{k} \setminus
U$ with $\abs{W} < d - \abs{U}$, \[
\E[x \sim q_z]{f_z(x)\chi_W(x)} = \E[X \sim q | X_U = z]{f(X) \chi_W(X)} =
q_U(z)^{-1}  \E[X \sim q]{f(X) \chi_W(X)\mathbf{1}[X_U = z]} = 0
\]
as the indicator function $\mathbf{1}[X_U^\pi = z]$ has Fourier degree $\le \abs{U}$. Moreover, whenever $q_U(z) > 0$, \[
\prob[x\sim p]{x_U^\pi = z} = \frac{\sum_{\substack{x \in \bools^m:\\ x_U^\pi =z}}
q(x^1) \nu(x)}{\sum_{x \in \bools^m} q(x^1) \nu(x)} = \frac{\sum_{x \in
\bools^{\brac{m} \setminus V}} q_U(z)q_z(x^1)\nu_z(x)}{2^m\Nc} = 2^{-\abs{U}}q_U(z)
\frac{\Nc_z}{\Nc}
\]
with the probability being $0$ when $q_U(z) = 0$. Therefore, as the inductive
hypothesis only depends on the size of the set being injected, 
\begin{align*}
\E[\pi|\Tc_{\tau}]{\Nc\E[x \sim p]{f(x^1)}^2} &=
\E[\pi|\Tc_{\tau}]{\Nc\paren*{\sum_{\substack{z \in \bools^U:\\ q_U(z) > 0}} \E[x \sim
p_z]{f_z(x^1)} \prob[x \sim p]{x_U^\pi = z}}^2}\\ &\le \E[\pi|\Tc_{\tau}]{\sum_{\substack{z
\in \bools^U:\\ q_U(z) > 0}} 2^{-\abs{U}}q_U(z)\Nc_z\E[x
\sim p_z]{f_z(x^1)}^2} \tag{Cauchy-Schwarz}\\
&\le \sum_{\substack{z \in \bools^U:\\ q_U(z) > 0}} \bO{2^{-|U|} q_U(z) \Qc_z
\paren*{\frac{C_z+1}{m - \abs{U}}}^{d - \abs{U}}} \tag{inductive hypothesis}\\
&\le  \sum_{z \in \bools^{\abs{U}}}\bO{\Qc_z
\paren*{\frac{C_z+1}{m}}^{d - \abs{U}}}.
\end{align*}
Now, note that $\sum_{z \in \bools^U} \Qc_z = 2^{\abs{U}}\Qc$, so we can write
\begin{align*}
\sum_{z \in \bools^{\abs{U}}} \Qc_z C_z^{d - \abs{U}} &= 2^{\abs{U}} \Qc
\sum_{z \in
\bools^{\abs{U}}}\frac{\Qc_z}{2^{\abs{U}}\Qc} \paren*{C - \abs{U} -\log
\frac{\Qc_z}{2^{\abs{U}}\Qc}}^{d -\abs{U}}\\
&\le \bO{\Qc} \paren*{\paren*{C - \abs{U}}^{d - \abs{U}} + \bO{1}}\\
&\le \bO{\Qc (C+1)^{d - \abs{U}}}
\end{align*}
since \[
\E[z \sim \mu]{\log^{d - \abs{U}} \frac{1}{\mu(z)}}
\]
is bounded by a constant for any distribution $\mu(z)$ with constant support size. So we have \[
\E[\pi|\Ec_{U,V}]{\Nc\E[x \sim p]{f(x^1)}^2} \le \max_{\tau : U \rightarrow V}
\E[\pi|\Tc_\tau]{\Nc\E[x \sim p]{f(x^1)}^2} \le
\bO{\Qc\paren*{\frac{C+1}{m}}^{d - \abs{U}}}.
\]
Therefore, for all $U, V$ with $\abs{U} = \abs{V} = r < d$, we have \[
\E[\pi|\Ec_{U,V}]{\Nc\E[x \sim p]{f(x^1)}^2}\prob[\pi]{\Ec_{U,V}} \le
\bO{\Qc \paren*{\frac{r}{m}}^r \cdot \paren*{\frac{C+1}{m}}^{d - r}}.
\]
Now for each $r \in \brac{k}$, if we write $\Ec_r$ for the event that
$\Ec_{U,V}$ for any $U \subseteq \brac{k}$ and $V\subseteq[m]$ with $|U| = \abs{V} = r$ and
$\wh{\nu}(V)^2 \ge 2^{-2k-2}\Qc^2$ (recall this holds whenever $\Nc < \Qc /
2^{k+1}$), then by \cref{lm:genkkl} the number of $V$
that can satisfy this is at most
\[
\bO{\paren*{\frac{C+1}{r}}^r}
\]
(where $2^{-2k}$ is absorbed in the big-O notation) and so if $\Ec = \bigcup_{r = 1}^{k} \Ec_r$,
\begin{align*}
\E[\pi|\Ec]{\Nc\E[x \sim p]{f(x^1)}^2}\prob[\pi]{\Ec} &\le \Qc\sum_{r=1}^{d-1}
\binom{k}{r} \cdot \bO{\frac{C+1}{r}^r} \cdot \paren*{\frac{r}{m}}^r \cdot
\bO{\paren*{{\frac{C+1}{m}}}^{d - r}}  \\ &\phantom{\le} +  \Qc\sum_{r=d}^k \binom{k}{r} \cdot
\bO{\frac{C+1}{r}^r}
\cdot \paren*{\frac{r}{m}}^r \cdot 1\\
&\le \bO{\Qc\paren*{{\frac{C+1}{m}}}^d}.
\end{align*}
Then, let
$\Ac$ be the event that $\Nc \ge \Qc /2^{k+1}$. Then \[
\E[\pi]{\Nc\E[x \sim p]{f(x^1)}^2} \le \E[\pi|\Ec]{\Nc\E[x \sim
p]{f(x^1)}^2}\prob[\pi]{\Ec} + \E[\pi|\Ac]{\Nc\E[x \sim
p]{f(x^1)}^2}\prob[\pi]{\Ac} 
\]
so it only remains to bound the second term in this sum. We have 
\begin{align*}
\E[x \sim p]{f(x^1)} &= \frac{1}{2^m \Nc}\sum_{x \in \bools^m} f(x^1) q(x^1) \nu(x)\\
&= \frac{1}{2^m \Nc}\sum_{x \in \bools^m} \sum_{U \subseteq \brac{k}} \wh{fq}(U) \chi_U(x^1) \nu(x)\\
&= \frac{1}{2^m \Nc}\sum_{x \in \bools^m} \sum_{U \subseteq \brac{k}}  \wh{fq}(U) \chi_{\pi(U)}(x) \nu(x) \\
&= \frac{1}{\Nc}  \sum_{U \subseteq \brac{k}}  \wh{fq}(U) \wh{\nu}(\pi(U))
\end{align*}
and so, by Cauchy-Schwarz inequality, Parseval's identity, and $\|fq\| = \bO{1}$,  \[
\Nc\E[x \sim p]{f(x^1)}^2 \le \frac{\bO{1}}{\Nc} \sum_{\substack{U \subseteq
\brac{k}: \abs{U} \ge d}} \wh{\nu}(\pi(U))^2
\]
and finally, by the definition of $\Ac$ and again applying \cref{lm:genkkl},
\begin{align*}
\E[\pi|\Ac]{\Nc\E[x \sim p]{f(x^1)}^2}\prob[\pi]{\Ac} &\le
\bO{1}\E[\pi]{\Qc^{-1} \sum_{\substack{U \subseteq
\brac{k}: \abs{U} \ge d}} \wh{\nu}(\pi(U))^2}\\
&\le \bO{1} \sum_{r = d}^{k} \Qc^{-1}\binom{m}{r}^{-1} \bO{\Qc^2 \paren*{\frac{C+1}{r}}^r}\\
&\le \bO{\Qc\paren*{\frac{C+1}{m}}^d}.
\end{align*}

\subsection{Quantum Lower Bound}
In this section we prove our quantum lower bound.
\begin{theorem}
\label{thm:quantlb}
For any $\alpha \in (0,1)$, $d, k, n \in \Nbb$ with $d \le k | \alpha n$, let $f
: \bools^k \rightarrow \bools$ and $q : \bools^k \rightarrow \brac{0,1}$ be a
function and probability distribution such that $\E[x \sim q]{f \chi_{S}} = 0$
whenever $\abs{S} < d$. Let Alice and Bob's inputs to an $\bhp_{k,n}^\alpha$ be
given by the hard distribution from {\rm\cref{sec:harddist}}. Then there is a
constant $D > 0$, depending only on $f$ and $\alpha$, such that if Alice sends
a quantum message of fewer than $\eps^{2/d}n^{1-2/d}/D$ qubits the probability
that Bob correctly guesses whether their input was a \yes{} or a \no{} instance
is bounded by \[
\frac{1}{2} + \eps.
\]
\end{theorem}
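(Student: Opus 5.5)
We follow the overview for the quantum case. Let $m(x)$ be the $c$-qubit density matrix Alice sends on input $x$. For fixed $(\sigma,y)$, Helstrom's bound says Bob's success probability is at most $\frac12+\frac12\norm{\rho_{\text{yes}}-\rho_{\text{no}}}_1$. Here $\rho_{\text{yes}},\rho_{\text{no}}$ are the normalized conditional states given $\sigma,y$.

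Since $y$ is uniform and independent of $\sigma$ in the hard distribution (as $f$ is balanced under $q$), we can write
\[
\rho_{\text{yes}}-\rho_{\text{no}}=2^{\alpha n/k}\E[x\sim Q]{\paren*{\mathbf 1[F(x)=y]-\mathbf 1[F(x)=\overline y]}m(x)},
\]
where $F(x)=(f(x^i))_i$. Expanding the indicators as products $\prod_i\frac{1\pm y_if(x^i)}{2}$ and telescoping gives the difference of the two products as a sum over $i$. The $i$-th term is $\prod_{j<i}\frac{1+y_jf(x^j)}{2}\cdot y_if(x^i)\cdot\prod_{j>i}\frac{1-y_jf(x^j)}{2}$ (up to factor $2$).

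The prefix and suffix factors are conditioning indicators. Using balancedness, each term becomes $\norm*{\E[x\sim Q\mid\sigma,\Fc]{f(x^i)m(x)}}_1$ up to constants, where $\Fc$ conditions on $f(x^j)=z_j$ for all $j\ne i$. Conditioning on both sides rather than only the prefix is harmless: the argument below conditions on all non-$i$ blocks anyway. By the triangle inequality, Bob's advantage is at most $\bO{1}\sum_i\max_z\E[\sigma]{\norm{\E[x\sim Q\mid\sigma,\Fc_z]{f(x^i)m(x)}}_1}$. It therefore suffices to prove the analogue of \cref{lm:fbounds}: each such term is $\bO{(c/n)^{d/2}}$. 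Summing over the $\alpha n/k$ blocks then gives advantage $\bO{n(c/n)^{d/2}}$, which is at most $\eps$ when $c\le \eps^{2/d}n^{1-2/d}/D$.

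To bound a single term, fix $\sigma_{\overline i}$ and define
\[
\nu(x)=2^{\alpha n/k-1}\sum_{y}m(x\cdot y)Q_{\overline i}(x\cdot y)\Fc_z(x\cdot y)
\]
on $\bools^{S_i}$, exactly as in the classical proof. The same balancedness calculation shows the weights sum to one, so $\nu(x)$ is a convex combination of density matrices and hence itself a $c$-qubit density matrix. Then
\[
\E[x\sim Q\mid\sigma,\Fc_z]{f(x^i)m(x)}=\E[u\sim\Uc]{(fq)(u^i)\,\nu(u)}=\sum_{U\subseteq[k]}\wh{fq}(U)\,\wh\nu(\sigma(U)),
\]
up to a constant factor. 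Since $\wh{fq}(U)=0$ for $\abs{U}<d$ and $\abs{\wh{fq}(U)}\le 1$, the trace norm is at most $\sum_{\abs{U}\ge d}\norm{\wh\nu(\sigma(U))}_1$.

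Now average over $\sigma|_{B_i}$, which remains a uniform injection into the $\abs{S_i}\ge(1-\alpha)n$ remaining coordinates. For fixed $U$ with $\abs U=r$, $\sigma(U)$ is a uniform $r$-subset, and \cref{lm:mathyp} with $\delta=1/c$ bounds the sum of $\norm{\wh\nu(V)}_1^2$ over all $r$-subsets $V$ by $4c^r$. Jensen's inequality then gives
\[
\E{\norm{\wh\nu(\sigma(U))}_1}\le\paren*{4c^r\binom{\abs{S_i}}{r}^{-1}}^{1/2}=\bO{(c/n)^{r/2}}.
\]
Summing over $r\ge d$ (finitely many terms) yields $\bO{(c/n)^{d/2}}$ when $c\le n$; the case $c>n$ is trivial.

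The main obstacle I expect is the telescoping and normalization step. We must check that every $(\sigma,y)$-dependent normalization is a genuine constant, and that after conditioning on the $f$-values of the other blocks, $\sigma|_{B_i}$ is still uniform and independent of everything $\nu$ depends on. This is what lets $\nu$ remain a density matrix, so that \cref{lm:mathyp} applies. Unlike the classical proof, no analysis of the ratio $\Qc/\Nc$ is needed, because the quantity being bounded is linear in $m$.
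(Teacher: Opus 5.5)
Your proposal is correct and follows essentially the paper's argument. The one difference is in the hybrid step. You telescope the product of indicators directly, so each term conditions on the $f$-values of \emph{all} blocks $j\neq i$, with normalization $2^{\alpha n/k-1}$. The paper instead passes through the unconditioned state $m_{\sigma,\epsilon}$ and telescopes over prefixes, conditioning only on blocks $j<i$. In both versions $\nu$ is a density matrix, and the subsequent Fourier expansion of $fq$, the matrix-hypercontractivity bound, and the Jensen step are the same.
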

We will again use the hard distribution from \cref{sec:harddist}. Let $c > 0$ be
the number of qubits in Alice's message (if $c = 0$ the theorem is trivial),
and let $m : \bools^n \rightarrow \mathcal{D}(\Cbb^{2^c})$ be the function such
that the density matrix of Alice's message is $m(x)$ when Alice sees $x$. For
all $z$ of $\alpha n/k$ bits or fewer, let $\Fc_z$ be the event that $f(x^i) =
z_i$ for all $i \in \brac{\text{len}(z)}$, and write \[
m_{\sigma, z} = \E[x \sim Q | \sigma, \Fc_z]{m(x)}.
\]
Then on receiving $\sigma$ and $y$, Bob's probability of distinguishing which
case he is in is given by the trace distance between $m_{\sigma,y}$ and
$m_{\sigma, \overline{y}}$. In expectation over the inputs, if Bob guesses
correctly with probability $\frac{1}{2} + \delta$, then
\begin{align*}
\delta &\le \frac{1}{4} \E[\sigma, y]{\norm{m_{\sigma,y} -
m_{\sigma,\overline{y}}}_1}\\
&\le \frac{1}{2}\E[\sigma, y]{\norm{m_{\sigma, y} -
m_{\sigma, \epsilon}}_1}\\
&\le \frac{1}{2}\sum_{i=1}^{\alpha n / k} \E[\sigma, y]{\norm{m_{\sigma,y_{\le i}} -
m_{\sigma,y_{\le i-1}}}_1},
\end{align*}
where $\epsilon$ denotes the empty string. Our objective is to bound
$\E[\sigma, y]{\norm{m_{\sigma,y_{\le i}} -  m_{\sigma,y_{\le i-1}}}_1}$ by $\eps
/ n$, which can be done with the following lemma, which is the quantum analogue of \cref{lm:fbounds}.
\begin{lemma}
\label{lm:qfbounds}
For all $i \in \brac{\alpha n / k}$, \[
\E[\sigma, y]{\norm{m_{\sigma,y_{\le i}}-
m_{\sigma,y_{\le i-1}}}_1} \le \bO{\paren*{\frac{c}{n}}^{d/2}}.
\]
\end{lemma}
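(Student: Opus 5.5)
Fix $i$ and condition on $\sigma_{\overline{i}} = \sigma|_{B_{\overline{i}}}$ and on $y_{\le i-1} = z$. The block $\sigma|_{B_i}$ is then a uniformly random injection $\pi : \brac{k} \to S_i$, where $S_i = \brac{n}\setminus\sigma(B_{\overline{i}})$. As in the classical proof, we average Alice's message over the other blocks. Set
\[
\nu(x) = 2^{i-1}\sum_{t \in \bools^{\sigma(B_{\overline{i}})}} m(x\cdot t)\,Q_{\overline{i}}(x\cdot t)\,\Fc_z(x\cdot t), \qquad x \in \bools^{S_i}.
\]
Because $f$ is balanced under $q$, the weights in this sum total $1$ for every $x$. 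So $\nu$ is a convex combination of density matrices and is itself density-matrix valued, and it does not depend on $\pi$. Only block $i$ carries a non-uniform factor $q$. Also, $y_i$ is uniform given the prefix, since $f$ is balanced under $q$ and $\sigma$ is uniform. It follows that
\[
m_{\sigma,z} = \E[u]{q(u^1)2^k\,\nu(u)}, \qquad m_{\sigma,(z,b)} = \E[u]{q(u^1)2^k(1+bf(u^1))\,\nu(u)},
\]
where $u$ is uniform on $\bools^{S_i}$ and $u^1 = u^\pi_{\brac{k}}$. Hence $m_{\sigma,y_{\le i}} - m_{\sigma,y_{\le i-1}} = y_i\,2^k\,\E[u]{(fq)(u^1)\,\nu(u)}$.

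Next I expand $fq$ in the Fourier basis, as in the proof of \cref{lm:recbound}:
\[
\E[u]{(fq)(u^1)\nu(u)} = \sum_{U\subseteq\brac{k}}\wh{fq}(U)\,\wh{\nu}(\pi(U)).
\]
By hypothesis $\wh{fq}(U)=0$ for $\abs{U}<d$, and $\abs{\wh{fq}(U)}\le 1$. The triangle inequality for the trace norm then gives
\[
\norm{m_{\sigma,y_{\le i}} - m_{\sigma,y_{\le i-1}}}_1 \le 2^k\sum_{\abs{U}\ge d}\norm{\wh{\nu}(\pi(U))}_1 .
\]
Now I take the expectation over $\pi$. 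For each fixed $U$ with $\abs{U}=r$, the set $\pi(U)$ is a uniformly random $r$-subset of $S_i$, where $\abs{S_i} = (1-\alpha)n+k = \Theta(n)$. Using Cauchy--Schwarz (Jensen) and then the corollary of \cref{lm:mathyp}, which applies because $\nu$ is density-matrix valued on $\bools^{S_i}$,
\[
\E[\pi]{\norm{\wh{\nu}(\pi(U))}_1} \le \Bigl(\binom{\abs{S_i}}{r}^{-1}\sum_{\abs{V}=r}\norm{\wh{\nu}(V)}_1^2\Bigr)^{1/2} \le \Bigl(\binom{\abs{S_i}}{r}^{-1} 4c^r\Bigr)^{1/2} = \bO{\paren*{\frac{c}{n}}^{r/2}} .
\]
The last step uses that $k$ and $\alpha$ are constants, so $\binom{\abs{S_i}}{r}\ge(\abs{S_i}/r)^r$.

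Finally I sum over the constantly many $U$ with $d\le\abs{U}\le k$. When $c\le n$ the $r=d$ terms dominate and the total is $\bO{(c/n)^{d/2}}$; when $c>n$ the bound is trivial since trace distance is at most $2$. This bound holds for every $\sigma_{\overline{i}}$ and every prefix $z$, so averaging over them yields the lemma. The step needing the most care is the first one: verifying that the conditional states decompose exactly via $\nu$ with weight $q(u^1)$. In particular I must check that conditioning on $\Fc_z$ introduces no dependence on $\pi$, which relies on each earlier block being balanced under $q$ regardless of its location, and that $\nu$ remains a genuine density matrix so that \cref{lm:mathyp} applies.
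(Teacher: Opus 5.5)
Your proposal is correct and follows the paper's proof essentially step for step: the same averaged density-matrix-valued function $\nu_i$ obtained by conditioning away the other blocks, the same Fourier expansion of $fq$ against $\wh{\nu_i}(\sigma(U))$, and the same combination of Jensen/Cauchy--Schwarz with the corollary of the matrix hypercontractive inequality. The only differences are cosmetic and do not affect the bound. You apply Jensen to each $U$ separately, whereas the paper applies Cauchy--Schwarz over the whole sum first. You also use the cruder estimate $\abs{\wh{fq}(U)}\le 1$ in place of $2^{-k}$.
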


Using the above result, we therefore have
\begin{align*}
\delta &\le \sum_{i=1}^{\alpha n/k}\bO{\paren*{\frac{c}{n}}^{d/2}}
\le \bO{\frac{c^{d/2}}{n^{d/2 - 1}}}
\le \eps
\end{align*}
if $c \le n^{1 - 2/d} \eps^{2/d} / D$ for a sufficiently large constant $D$,
proving \cref{thm:quantlb}.

\subsubsection{Proof of \texorpdfstring{\Cref{lm:qfbounds}}{Lemma 11}}
In this section we will prove the quantum version of \cref{lm:fbounds}. We start by observing that for all $z \in \bools^i$, \[
m_{\sigma,z_{\le i}}- m_{\sigma,z_{\le i-1}} = \frac{1}{2}
\paren*{m_{\sigma,z_{\le i}} - m_{\sigma,z_{\le i - 1}\cdot \overline{z_i}}} =
z_i \E[x | \sigma, \Fc_{z_{\le i -1}}]{f(x^i)m(x)}
\]
as $f(x^i)$ is uniform under $Q$ conditioned on $\sigma$ and $(f(x^j))_{j <
i}$. Let
\begin{align*}
B_{\overline{i}} &= \bigcup_{j \in \brac{\alpha n/k} \setminus \set{i}} B_j, \\
S_i &= [n]\setminus\sigma(B_{\overline{i}}), \\
\sigma_{\overline{i}} &= \sigma|_{B_{\overline{i}}},
\end{align*}
and we will proceed to bound \[
\E[\sigma | \sigma_{\overline{i}}, \Fc_z]{\norm*{\E[x \sim Q | \sigma,
\Fc_z]{f(x^i)m(x)}}_1}
\]
for all $z \in \bools^{i-1}$ and values of $\sigma_{\overline{i}}$. Write $Q_{\overline{i}} : \bools^n \rightarrow \brac{0,1}$ and $\nu_i :
\bools^{S_i} \rightarrow \Cbb^{2^c \times 2^c}$ as
\begin{align*}
Q_{\overline{i}}(x) &= \prod_{j \in \brac{\alpha n/k} \setminus \set{i}} q(x^j), \\
\nu_i(u) &= 2^{i-1}\sum_{v \in \bools^{\sigma(B_{\overline{i}})}}
Q_{\overline{i}}(u\cdot v)\Fc_z(u \cdot v) m(u \cdot v),
\end{align*}
where we abuse notation and let $\Fc_{z}(x)$ denote the indicator function on whether $x$ makes $\Fc_z$
hold (this depends on $\sigma$). Note that $\nu_i(u)$ is a density matrix for
each $u$, as each $m(u\cdot v)$ is a density matrix, $Q_{\overline{i}}$ is a
probability distribution on $\bools^{\sigma(B_{\overline{i}})}$, and a string
sampled under $Q_{\overline{i}}$ will have an exactly $2^{1-i}$ probability of
satisfying $\Fc_z$. Noting that for all $u \in \bools^{S_i}$, $(u\cdot v)^i$ only
depends on $u$, we write $u^i = (u \cdot 1^{\brac{n} \setminus S_i})^i$. We also write, for every $U \subseteq \brac{k}$, $U^{\rightarrow i}$ for the
copy of $U$ in
$B_i$, i.e., \[
\set{j + (i - 1)k : j \in U}.
\]
With this notation set, we have
\begin{align*}
\E[x \sim Q | \sigma, \Fc_z]{f(x^i)m(x)} &= 2^{i-1}\sum_{x \in \bools^n} f(x^i)
Q(x) \Fc_z(x)m(x)\\
&= 2^{(\alpha - 1)n}\sum_{u \in \bools^{S_i}} q(u^i) f(u^i) \nu_i(u) \tag{definition of $\nu_i$ and $Q$}\\
&= 2^{(\alpha - 1)n}\sum_{u \in \bools^{S_i}} \sum_{U \subseteq \brac{k}}
\wh{qf}(U) \chi_U(u^i) \nu_i(u)\\
&= 2^k\sum_{U \subseteq \brac{k}}\wh{qf}(U) 2^{-\abs{S_i}}\sum_{u \in \bools^{S_i}}
\chi_{\sigma(U^{\rightarrow i})}(u) \nu_i(u) \tag{$|S_i| = (1-\alpha)n + k$}\\
&= 2^k\sum_{U \subseteq \brac{k}}\wh{qf}(U) \wh{\nu_i}(\sigma(U^{\rightarrow i})).
\end{align*}
Therefore,
\begin{align*}
\norm*{\E[x \sim Q | \sigma, \Fc_z]{f(x^i)m(x)}}_1 &\le 2^k\sum_{U
\subseteq \brac{k}}\abs{\wh{qf}(U)} \norm{\wh{\nu_i}(\sigma(U^{\rightarrow i}))}_1\\
&\le 
\sum_{\substack{U \subseteq B_i:\\ \abs{U} \ge d}}
\norm{\wh{\nu_i}(\sigma(U))}_1 \tag{$\widehat{qf}(U) = 0$ if $|U|<d$ and $\widehat{qf}(U) \leq 2^{-k}$ otherwise}\\
&\le 2^{k/2} \sqrt{\sum_{\substack{U \subseteq B_i:\\ \abs{U} \ge d}}
\norm{\wh{\nu_i}(\sigma(U))}_1^2}. \tag{Cauchy-Schwarz}
\end{align*}
Therefore, as $\sigma|_{B_i}$ is a uniform injection to $S_i$
conditioned on $\sigma_{\overline{i}}$ and $\Fc_z$, 
\begin{align*}
\E[\sigma | \sigma_{\overline{i}}, \Fc_z]{\norm*{\E[x \sim Q | \sigma,
\Fc_z]{f(x^i)m(x)}}_1} &\le 2^{k/2} \sqrt{\sum_{\substack{U \subseteq B_i:\\
\abs{U} \ge d}} \sum_{V \subseteq S_i} \norm{\wh{\nu_i}(V)}_1^2 \prob[\sigma
| \sigma_{\overline{i}}, \Fc_z]{\sigma(U) = V}} \tag{Jensen's inequality}\\
&\le 2^k\sqrt{\sum_{r = d}^k \binom{\abs{S_i}}{r}^{-1} \sum_{V \subseteq S_i: \abs{S_i} = r}
\norm{\wh{\nu_i}(V)}_1^2} \\
&\le 2^k\sqrt{\sum_{r=d}^k \paren*{\frac{r}{(1 - \alpha)n}}^r 4c^r} \tag{\Cref{lm:mathyp}}\\
&= \bO{\paren*{\frac{c}{n}}^{d/2}}
\end{align*}
whenever $c \le n$, proving \cref{lm:qfbounds}.

\section{Upper Bounds}
\subsection{Classical Upper Bound}

\begin{theorem}
    \label{thm:classical-upper-bound}
    For every $\alpha \in (0,1)$, $d, k, n \in \Nbb$ with $d \le k | \alpha n$, let 
    $f: \bools^k \rightarrow \bools$ be a function with sign degree $d$. Let $p$ be a degree-$d$ polynomial sign-representing
    $f$ such that $\max_{z\in\{1,-1\}^k}\abs{p(z)} \leq 1$ and let  $\beta:=\min_{z\in\bools^k}\abs{p(z)}>0$
    be its margin.
    Given \(\eps>0\), $\bhp_{k,n}^\alpha$ can be solved by a randomized protocol with error $\eps$ using
    \begin{align*}
        \mathrm{O}\left(\left(\frac{k\hat{\|}p\hat{\|}_1^2}{\alpha\beta^2}\right)^{1/d}n^{1 - 1/d}\log\frac{1}{\eps}\right) \quad\text{bits of communication}.
    \end{align*}
\end{theorem}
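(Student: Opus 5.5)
Write $p=\sum_{|U|\le d}\wh{p}(U)\chi_U$. For every $i\in[\alpha n/k]$ we have $y_ip(x^i)\ge\beta$ in a \yes{} instance and $y_ip(x^i)\le-\beta$ in a \no{} instance. Hence the statistic
\[
T=\frac{k}{\alpha n}\sum_{i=1}^{\alpha n/k} y_i p(x^i)=\sum_{U\subseteq[k]}\wh{p}(U)\,T_U,\qquad T_U=\frac{k}{\alpha n}\sum_{i}y_i\chi_U(x^i),
\]
is at least $\beta$ in a \yes{} instance and at most $-\beta$ in a \no{} instance. The plan is to let Bob produce an unbiased estimator $\widehat{T}$ of $T$ with variance $O(\beta^2)$. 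Chebyshev then gives constant success probability, and taking the majority over $O(\log(1/\eps))$ independent repetitions brings the error down to $\eps$.

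\textbf{Protocol.} Using public randomness, Alice and Bob choose a random set $R\subseteq[n]$ in which each index is included independently with probability $\rho$. Alice sends $x_R$; the indices are known from the shared randomness, so this costs only $|R|$ bits, which is $O(\rho n)$ with high probability (abort if it is larger). For each block $i$ and each $U$ with $|U|\le d$, Bob knows $\chi_U(x^i)$ exactly when $\sigma(U^{\rightarrow i})\subseteq R$, which happens with probability $\rho^{|U|}$. He sets
\[
\widehat{T}_U=\frac{k}{\alpha n}\sum_i y_i\chi_U(x^i)\,\rho^{-|U|}\,\mathbf{1}[\sigma(U^{\rightarrow i})\subseteq R],\qquad \widehat{T}=\sum_U\wh{p}(U)\widehat{T}_U .
\]
Each $\widehat{T}_U$ is unbiased, so $\widehat{T}$ is unbiased.

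\textbf{Variance.} By Cauchy--Schwarz with weights $|\wh p(U)|$,
\[
\Var{\widehat{T}}\le \hat{\|}p\hat{\|}_1\sum_U|\wh p(U)|\,\Var{\widehat{T}_U}.
\]
For a fixed $U$, different blocks occupy disjoint index sets. The indicator events for different $i$ are therefore independent, and
\[
\Var{\widehat{T}_U}\le \Big(\frac{k}{\alpha n}\Big)^2\cdot\frac{\alpha n}{k}\,\rho^{-|U|}=\frac{k}{\alpha n\rho^{|U|}}\le \frac{k}{\alpha n\rho^{d}},
\]
using $\rho\le1$. This gives
\[
\Var{\widehat{T}}\le \frac{k\hat{\|}p\hat{\|}_1^2}{\alpha n\rho^d}.
\]
Requiring this to be at most $\beta^2/9$ and applying Chebyshev, $\widehat{T}$ has the correct sign with probability at least $8/9$ once
\[
\rho=\Big(\frac{9k\hat{\|}p\hat{\|}_1^2}{\alpha\beta^2 n}\Big)^{1/d}.
\]
The expected message length is then
\[
\rho n=O\Big(\Big(\frac{k\hat{\|}p\hat{\|}_1^2}{\alpha\beta^2}\Big)^{1/d}n^{1-1/d}\Big),
\]
matching the claimed bound. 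If this $\rho$ exceeds $1$, Alice simply sends all of $x$, which is within the same bound. By Markov's inequality the length exceeds three times its mean with probability at most $1/3$, so truncating there keeps the per-round success probability a constant above $1/2$. Repeating $O(\log(1/\eps))$ times and taking the majority finishes the proof.

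\textbf{Main obstacle.} The delicate part is the accounting rather than any single hard step. Public randomness is needed so that Alice never has to send index identities; otherwise an extra $\log n$ factor appears, or one must use a limited-independence hash as the paper's footnote suggests. One must also be careful that the variance bound uses only independence across disjoint blocks for each fixed $U$, while the cross terms between different $U$ are absorbed by Cauchy--Schwarz.
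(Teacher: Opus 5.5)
Your proposal is correct and takes essentially the same approach as the paper. Both use public-coin independent sampling at rate $\Theta(n^{-1/d})$ and the same inverse-probability-weighted unbiased estimator of $\frac{k}{\alpha n}\sum_i y_i p(x^i)$. Both obtain the variance bound $k\hat{\|}p\hat{\|}_1^2/(\alpha n\rho^d)$ from independence across disjoint blocks, then apply Chebyshev and majority amplification. The differences are cosmetic: you organize the variance by monomial via a weighted Cauchy--Schwarz rather than by expanding $\widetilde{p}_i^2$ per block, and you truncate the message length using Markov rather than Chernoff, which avoids the paper's side condition on $C$.
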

\begin{proof}
Write the Fourier expansion of $p$ as
\[
    p(z)
    =
    \sum_{T\subseteq[k]: |T|\le d}
    \widehat{p}(T)\chi_T(z)
\]
Let Bob's input consist of $m=\alpha n/k$ pairwise disjoint blocks
$B_1,\ldots,B_m$ together with labels \(y_1,\ldots,y_m\in\bools\).
Under the problem's promise, there is some
\(b\in\bools\) such that
\[
    y_i f(x^i)=b
    \qquad
    \text{for every }i\in[m].
\]
Define
\[
    p_y(x)
    :=
    \frac{1}{m}
    \sum_{i=1}^m
    y_i p(x^i) = \frac{1}{m}\sum_{i=1}^m y_i p(x_{B_i}).
\]
Every term in this average has sign $b$ and magnitude at least
$\beta$, meaning that $bp_y(x)\geq \beta$ for all $x\in\{1,-1\}^n$.
It is thus enough for Bob to estimate $p_y(x)$ to additive error
less than $\beta$.

\paragraph{The protocol.}
Let \(C>0\) be a constant to be chosen below.
Note that the case when $C n^{-1/d} \geq 1$ is trivial: $C n^{1-1/d} \geq n$ and thus Alice can send her whole input to Bob using $C n^{1-1/d}$ bits. Thus we assume that $C n^{-1/d} < 1$.

Using public randomness, Alice includes each coordinate \(j\in[n]\)
independently in a sample \(S\) with probability $q := C n^{-1/d}$. Let $L:=2nq$. If $|S| > L$, the sampling is declared to have failed, and Bob outputs an arbitrary bit. Otherwise, Alice sends the values $\{x_j:j\in S\}$ in the publicly known canonical order of $S$ using $|S| \leq 2C n^{1-1/d}$ bits.

\paragraph{Sample is too big.}
Let us bound the probability of error caused by $S$ being too big. The random variable $|S|$ is distributed as
\[
    |S|\sim\operatorname{Bin}(n,q) \qquad\text{with mean}\qquad \mu=qn=Cn^{1-1/d}.
\]
By a Chernoff bound,
\begin{equation}
    \label{eq:abort-error}
    \Pr[S\text{ is too big}] = \Pr[|S|>2\mu]
    \le
    e^{-\mu/3}.
\end{equation}

\paragraph{The estimator.}
For $U\subseteq[n]$, let $I_U:=\mathbf 1[U\subseteq S]$. For \(T\subseteq[k]\) and block $B_i = (b_{i,j})_{j\in[k]}$, define $B_i(T):=\{b_{i,j}:j\in T\}$. Bob computes
\[
    \widetilde{p}_i
    :=
    \sum_{T\subseteq[k]: |T|\le d}
    \widehat p(T)
    \chi_T(x_{B_i})
    \frac{I_{B_i(T)}}{(C n^{-1/d})^{|T|}},
\]
where the term is computable whenever
$I_{B_i(T)}=1$, since all coordinates involved have been revealed.
We use the convention that the $T=\emptyset$ term is simply
\(\widehat{p}(\emptyset)\). Notice that $\E[S]{I_U}=q^{|U|} = (C n^{-1/d})^{|U|}$ for every \(U\subseteq[n]\), consequently $\E[S]{\widetilde{p}_i} = p(x_{B_i})$.

Finally, Bob computes
\[
    \widetilde{p}_y
    :=
    \frac{1}{m}
    \sum_{i=1}^m y_i\widetilde{p}_i,
\]
which is an unbiased estimator for $p_y(x)$, i.e., $\E[S]{\widetilde{p}_y} = p_y(x)$, and outputs $\operatorname{sgn}(\widetilde{p}_y)$.

\paragraph{Variance.}
Because Bob's blocks are pairwise disjoint and the coordinates are
sampled independently, the random variables $\widetilde{p}_1,\ldots,\widetilde{p}_m$ are mutually independent.  Therefore
\[
    \Var[S]{\widetilde{p}_y}
    =
    \frac{1}{m^2}
    \sum_{i=1}^m
    \Var[S]{\widetilde{p}_i}
    \leq
    \frac{1}{m^2}
    \sum_{i=1}^m
    \E[S]{\widetilde{p}_i^2}.
\]

Fix a block \(i\in[\alpha n/k]\).  For \(T,U\subseteq[k]\),
\[
    I_{B_i(T)}I_{B_i(U)}
    =
    I_{B_i(T\cup U)},
\]
and therefore
\[
    \E[S]
    {
        \frac{
            I_{B_i(T)}I_{B_i(U)}
        }{
            q^{|T|+|U|}
        }
    }
    =
    q^{-|T\cap U|}.
\]
It follows that
\begin{align*}
	\E[S]{\widetilde{p}_i^2}
    &=
    \sum_{T,U\subseteq[k]: |T|,|U|\le d}
    \widehat{p}(T)\widehat{p}(U)
    \chi_T(x_{B_i})
    \chi_U(x_{B_i})
    q^{-|T\cap U|} 
    \leq
    \sum_{T,U\subseteq[k]: |T|,|U|\le d}
    \abs{\widehat{p}(T)}\,\abs{\widehat{p}(U)}
    q^{-|T\cap U|}.
\end{align*}
Since $|T\cap U|\leq d$ and \(q<1\), $q^{-|T\cap U|} \leq q^{-d}$. Thus $\E[S]{\widetilde{p}_i^2} \le \hat{\|}p \hat{\|}_1^2q^{-d}$ and hence
\begin{align*}
	\Var[S]{\widetilde{p}_y}
    \le
    \frac{\hat{\|}p \hat{\|}_1^2}{mq^d}
    =
    \frac{k\hat{\|}p \hat{\|}_1^2}{\alpha n q^d}
    =
    \frac{k\hat{\|}p \hat{\|}_1^2}{\alpha C^d}.
\end{align*}
An incorrect output in an execution requires
\[
    |\widetilde{p}_y-p_y(x)|\geq\beta.
\]
Chebyshev's inequality therefore gives
\begin{equation}
    \label{eq:estimation-error}
    \prob[S]{
        \operatorname{sgn}(\widetilde{p}_y)\neq b
    }
    \le
    \frac{k\hat{\|}p \hat{\|}_1^2}{\alpha\beta^2C^d}.
\end{equation}
By taking \[C \geq \max\left\{\left(\frac{6k\hat{\|}p \hat{\|}_1^2}{\alpha \beta^2}\right)^{1/d}, \frac{3}{n^{1-1/d}}\ln(6)\right\},\]
the total error probability according to \eqref{eq:abort-error} and
\eqref{eq:estimation-error} is at most
\(1/3\). Finally, by repeating the protocol $O(\log(1/\eps))$ times and taking the majority vote of the outputs, the error probability can be reduced to at most \(\eps\).
Thus, we have designed a one-way protocol with error
at most $\eps$ and randomized communication complexity
\[
    \mathrm{O}_{f,\alpha,\varepsilon}
    \left(n^{1-1/d}\right). \qedhere
\]

\end{proof}

\subsection{Quantum Upper Bound}
\label{sec:quantum-upper-bound}

We next give a quantum upper bound matching the previously proven lower bound
when the sign degree is even. Our proof closely follows the quantum upper-bound proof for Boolean Hidden Hypermatching due to Shi, Wu, and Yu
\cite[Appendix~A.1]{SWY12}.  In their protocol, one copy of a
fingerprint state reveals the parity of a uniformly random pair, and
the generalized birthday paradox implies that
$O(n^{1-2/k})$ copies suffice to recover the $k/2$ pair parities
belonging to one common $k$-set, for even $k$. We apply the same argument to the pairs comprising a sampled Fourier
monomial of size at most $d$.

\begin{theorem}
    \label{thm:quantum-upper-bound}
    For every $\alpha \in (0,1)$, $d, k, n \in \Nbb$ with $d \le k | \alpha n$, let 
    $f: \bools^k \rightarrow \bools$ be a function with sign degree $d \geq 1.$\footnote{$d\leq 2$ has been resolved in \cite{DoriguelloMontanaro20}.}       
    Let $p$ be a degree-$d$ polynomial sign-representing
    $f$ such that $\max_{z\in\{1,-1\}^k}\abs{p(z)} \leq 1$ and let  $\beta:=\min_{z\in\bools^k}\abs{p(z)}$
    be its margin.  Then, for every $\eps>0$, there exists a one-way quantum communication protocol solving $\bhp_{k,n}^\alpha$ with error at most $\eps$ sending 
    \[
        \mathrm{O}
    \left(
        \frac{\hat{\|}p\hat{\|}_1^2}{\beta^2}\left(\frac{k}{\alpha}\right)^{1/\lceil d/2 \rceil} n^{1-1/\lceil d/2 \rceil}\log{n}\log\frac1\eps
    \right) \quad\text{qubits}.
    \]
    In particular, for constant $\alpha$, $k$, and $\eps$, the communication complexity is $\mathrm{O}(n^{1-1/\lceil d/2 \rceil} \log{n})$. 
\end{theorem}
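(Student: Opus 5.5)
The plan is to mirror the classical upper bound (\cref{thm:classical-upper-bound}), replacing coordinate sampling by quantum pair-parity sampling. As there, set $m=\alpha n/k$ and $p_y(x)=\frac1m\sum_i y_i p(x^i)$, so that $b\,p_y(x)\ge\beta$. It then suffices for Bob to estimate $p_y(x)=\sum_{T}\wh p(T)\,A_T$ to additive error $<\beta$, where
\[
A_T:=\frac1m\sum_{i=1}^m y_i\chi_T(x^i).
\]
Because $\sum_T|\wh p(T)|=\hat{\|}p\hat{\|}_1$, I would estimate each $A_T$ (with $|T|\le d$) separately to error $\beta/(2\hat{\|}p\hat{\|}_1)$. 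Since $k$ is constant there are only $2^k$ such $T$, and the union bound over them costs only constants. Concretely, it is enough to obtain $N=\bO{\hat{\|}p\hat{\|}_1^2/\beta^2}$ independent samples of $y_i\chi_T(x^i)$ with $i$ uniform in $[m]$: each sample is $\pm1$-bounded and unbiased for $A_T$, so Chebyshev's inequality applies.

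For a fixed $T$, partition $T$ into $\lfloor |T|/2\rfloor$ pairs $P_1,P_2,\dots$ and at most one singleton. Alice sends $\ell$ copies of the fingerprint state $\frac1{\sqrt n}\sum_j x_j\ket{j}$, each costing $\bO{\log n}$ qubits. For the pair $P_s$, Bob measures a copy in the basis containing $(\ket{a}\pm\ket{b})/\sqrt2$ for the pairs $\{a,b\}=\sigma(P_s^{\rightarrow i})$, $i\in[m]$, completed arbitrarily. With probability $2m/n=\bT{1}$ the outcome is one of these vectors; conditioned on that, it reveals $x_ax_b$ for a uniformly random block $i$. The singleton is handled similarly with a computational-basis measurement or a classically sampled coordinate.

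Bob assigns the copies to the pairs: roughly $\ell/\lceil d/2\rceil$ copies go to each pair $P_s$, and one group of copies goes to the singleton. He then looks for a block index $i$ that was hit in every group. This is the generalized birthday-paradox step of~\cite{SWY12}, and it is the main obstacle. Each group yields about $\ell$ essentially uniform, independent indices in $[m]$. The expected number of $i$ common to all $g=\lceil|T|/2\rceil\le\lceil d/2\rceil$ groups is $m(\Theta(\ell)/m)^g$. This is $\Omega(1)$ once $\ell=\Theta(m^{1-1/g})=\Theta((\alpha n/k)^{1-1/\lceil d/2\rceil})$, which gives the stated $(k/\alpha)^{1/\lceil d/2\rceil}n^{1-1/\lceil d/2\rceil}$ scaling.

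A second-moment calculation, as in~\cite[Appendix~A.1]{SWY12}, shows that such a common index exists with constant probability. I would also argue that the common index is uniform over $[m]$ by symmetry: the hit sets are exchangeable under relabelling of the blocks, so taking a uniformly random common index when several exist keeps it uniform. Multiplying the revealed pair parities gives an unbiased sample $y_i\chi_T(x^i)$.

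Repeating this $\bO{N}$ times per $T$ yields the required samples. Finally, boosting the success probability by $\bO{\log(1/\eps)}$ repetitions with a majority vote gives the claimed total of $\bO{\frac{\hat{\|}p\hat{\|}_1^2}{\beta^2}(k/\alpha)^{1/\lceil d/2\rceil}n^{1-1/\lceil d/2\rceil}\log n\log\frac1\eps}$ qubits.
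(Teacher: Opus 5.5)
Your protocol works in outline, but two points need repair before it proves the displayed bound.

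\textbf{Singletons.} The computational-basis option fails. Measuring $\frac{1}{\sqrt n}\sum_j x_j\ket{j}$ in the computational basis returns $j$ with probability $1/n$ regardless of $x$ and discards the sign $x_j$, so Bob learns nothing. Your other option, where Alice sends $x_j$ for a publicly random $j$, does work. The paper handles this case uniformly: it adds block-specific dummy basis states $\ket{D_{i,1}},\dots,\ket{D_{i,2r}}$ with amplitude $+1$ to the fingerprint, and pads every monomial to exactly $2r=2\lceil d/2\rceil$ positions. A singleton then becomes a pair with a dummy.

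\textbf{Extra factor from per-monomial estimation.} You estimate each $A_T$ separately to accuracy $\beta/(2\hat{\|}p\hat{\|}_1)$ and union-bound over all $K=\sum_{j\le d}\binom{k}{j}$ monomials. This multiplies the cost by roughly $K^2$ with Chebyshev, or $K\log K$ with Hoeffding. That factor is not a constant in the theorem, whose only $k$-dependence is $(k/\alpha)^{1/\lceil d/2\rceil}$. As written, you therefore obtain the ``in particular'' clause but not the displayed bound. The paper avoids this by importance sampling. Each trial draws a single $S$ with probability $|\wh p(S)|/\hat{\|}p\hat{\|}_1$ and uses $Z=\hat{\|}p\hat{\|}_1\operatorname{sgn}(\wh p(S))\,y_J\chi_S(x^J)$. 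Since $|Z|\le\hat{\|}p\hat{\|}_1$, Hoeffding with $\bO{\hat{\|}p\hat{\|}_1^2\beta^{-2}\log\frac1\eps}$ trials suffices, independently of the number of monomials.

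\textbf{Other differences, and one slip.} The paper measures each copy in a single basis covering all $r$ pairs of all blocks, so each useful outcome is a uniform $(i,\ell)\in[m]\times[r]$. It then handles the collision by Poissonization rather than your per-pair groups plus a second-moment argument; both are fine.

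More significantly, the paper never needs $J$ to be uniform. By the promise, $b\,y_Jp(x^J)\ge\beta$ for \emph{every} block. It therefore suffices that $J$ is chosen from the observed indices alone, which makes it independent of $S$. Your exchangeability argument for uniformity is correct, but it is only needed because you estimate the average $p_y(x)$ rather than a single block's $y_Jp(x^J)$.

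Finally, a bookkeeping slip: $\Theta(m^{1-1/g})$ is the number of \emph{useful hits} needed per group. Each copy is useful with probability $2m/n=2\alpha/k$, so the number of copies is $\Theta(nm^{-1/g})=\Theta((k/\alpha)^{1/g}n^{1-1/g})$. That is what actually yields the stated scaling.
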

\begin{proof}
Write the Fourier expansion of $p$ as
\[
    p(z)
    =
    \sum_{T\subseteq[k]:|T|\le d}
    \widehat p(T)\chi_T(z).
\]
Let $m:= \alpha n/k$ denote the number of blocks relevant to Bob and let $r:=\lceil d/2\rceil$. 

\paragraph{Alice's fingerprint state.}
For every block index $i\in[m]$, introduce $2r$ orthogonal dummy
basis states
\[
    \ket{D_{i,1}},\ldots,\ket{D_{i,2r}}.
\]
The dummy states contain no information about $x$ and will allow every Fourier monomial of degree at most $2r$ to be written as a product of exactly $r$ pair-products. 

Let $N:=n+2rm$. Alice prepares copies of
\begin{equation}
    \label{eq:quantum-fingerprint-state}
    \ket{\psi_x}
    :=
    \frac{1}{\sqrt{N}}
    \left(
        \sum_{j=1}^n x_j\ket{j}
        +
        \sum_{i=1}^m
        \sum_{\ell=1}^{2r}
        \ket{D_{i,\ell}}
    \right).
\end{equation}
Each copy requires $\bO{\log{n}}$ qubits.

\paragraph{Sampling one Fourier monomial.}

A trial begins with Bob sampling a set $S\subseteq[k]$ according~to
\begin{equation*}
    \Pr[S]
    =
    \frac{\abs{\widehat{p}(S)}}{\hat{\|}p \hat{\|}_1}.
\end{equation*}
For every $z\in\bools^k$,
\begin{align}
	\E[S]{\hat{\|}p \hat{\|}_1\,
        \operatorname{sgn}(\widehat{p}(S))
        \chi_S(z)}
    =
    \sum_{S\subseteq[k]}
    \frac{\abs{\widehat p(S)}}{\hat{\|}p \hat{\|}_1}
    \hat{\|}p \hat{\|}_1\operatorname{sgn}(\widehat{p}(S))
    \chi_S(z)
    =
    \sum_{S\subseteq[k]}
    \widehat p(S)\chi_S(z)
    =
    p(z).
    \label{eq:sampled-monomial-unbiased}
\end{align}
Thus it suffices to recover the value of one randomly sampled
Fourier monomial on one of Bob's blocks.

\paragraph{Padding and pairing the monomial.}

Since $|S| \leq d$, pad $S$ with $2r-|S|$
distinct dummy positions.
The corresponding dummy variables all have value $+1$, so this does not change the
value of the monomial. Fix a deterministic rule which partitions the resulting $2r$
positions into $r$ pairs $e_1(S),\dots,e_r(S)$. To describe the corresponding measurement, for
$i\in[m]$ and $a\in[k+2r]$ define
\[
    z_{i,a}
    :=
    \begin{cases}
        x_a^i,
            & a\in[k],\\[1mm]
        1,
            & a\in[k+1,k+2r],
    \end{cases}
\]
and
\[
    \ket{v_{i,a}}
    :=
    \begin{cases}
        \ket{j},
            & j = \sigma((i-1)k+a), a\in[k],\\[1mm]
        \ket{D_{i,\ell}},
            & a=k+\ell, \ell\in[2r].
    \end{cases}
\]
If $e_\ell(S)=\{a_\ell,b_\ell\}$, define $u_{i,\ell} := z_{i,a_\ell}z_{i,b_\ell}$. By construction,
\begin{equation}
    \label{eq:monomial-pair-factorization}
    \chi_S(x^{i})
    =
    \prod_{\ell=1}^r u_{i,\ell}.
\end{equation}

\paragraph{One copy reveals one random pair-product.}

Fix the sampled Fourier monomial $S$ and its pairing
$e_1(S),\ldots,e_r(S)$. Bob measures one copy of $\ket{\psi_x}$ in a basis containing,
for every $i\in[m]$ and $\ell\in[r]$, the two vectors
\begin{equation}
    \label{eq:pair-measurement-basis}
    \frac{
        \ket{v_{i,a_\ell}}
        \pm
        \ket{v_{i,b_\ell}}
    }{\sqrt{2}}.
\end{equation}
The remaining basis vectors may be chosen arbitrarily to complete an
orthonormal basis. The vectors in \eqref{eq:pair-measurement-basis} are mutually
orthogonal since Bob's blocks are disjoint, the dummy coordinates are
block-specific, and the pairs $e_1(S),\ldots,e_r(S)$
are disjoint.

The probability that the measurement lands in the two-dimensional subspace indexed by $(i,\ell)\in[m]\times[r]$ is $2/N$. Conditioned on this event, the post-measurement state is
\[
    \frac{
        z_{i,a_\ell}\ket{v_{i,a_\ell}}
        +
        z_{i,b_\ell}\ket{v_{i,b_\ell}}
    }{\sqrt{2}}.
\]
Hence the sign of the outcome in
\eqref{eq:pair-measurement-basis} reveals exactly $z_{i,a_\ell}z_{i,b_\ell} = u_{i,\ell}$ and a useful measurement outcome gives Bob $(i,\ell,u_{i,\ell})$ where $(i,\ell)$ is uniform on $[m]\times[r]$. The total probability of obtaining a useful outcome is
\begin{equation*}
    \lambda
    :=
    \frac{2mr}{N}
    =
    \frac{\alpha 2\lceil d/2\rceil}{k+\alpha 2\lceil d/2\rceil} = \bO{\frac{\alpha d}{k}}.
\end{equation*}

\paragraph{The birthday collision.}
The idea is for Alice to send several copies of $\ket{\psi_x}$ to Bob, who measures each copy independently. With constant probability, some block $J\in[m]$ is sampled at least $r$ times given $\mathrm{O}(m^{1-1/r})$ samples according to the generalized birthday paradox, and the corresponding pair labels $\ell\in[r]$ are all distinct with probability $r!/r^r = \Omega_r(1)$ (since conditioned on a sample landing in block $J$, its pair label $\ell\in[r]$
is uniform). 

We can make this reasoning more precise by ``poissonizing'' the number of copies and measurements we use (choosing the number from a Poisson distribution with a matching expectation), making our samples arrive as a Poisson process with rate $\lambda$. Each pair $(i,\ell)$ thus arrives at rate $\lambda / mr = 2/N$. All pair labels of block $J\in[m]$ arrive after $s$ samples with probability $(1-e^{-2s/N})^r$. The probability that no block receives $r$ distinct pair labels after $s$ samples is then (since all blocks are independent)
\begin{align*}
    (1-(1-e^{-2s/N})^r)^m,
\end{align*}
which is at most a small constant, say $1/3$, for which it suffices to set
\begin{align*}
s =\bO{N m^{-1/r}} = \bO{(k/\alpha)^{1/r} n^{1-1/r}}.
\end{align*}
Consequently, if Alice sends (in expectation over her Poisson draw) $c (k/\alpha)^{1/r} n^{1-1/r}$ copies of $|\psi_x\rangle$ for sufficiently large constant $c$, Bob obtains a block $J$
for which he knows all $r$ pair-products $u_{J,1},\ldots,u_{J,r}$ with
probability at least $2/3$. Moreover, this holds even if Alice aborts whenever
she sends more than $100$ times as many states as she would in expectation, so
from hereon we assume she never sends more than $\bO{(k/\alpha)^{1/r}
n^{1-1/r}}$ copies.

On this event, \eqref{eq:monomial-pair-factorization} gives
\begin{equation*}
    \chi_S(x^J)
    =
    \prod_{\ell=1}^r u_{J,\ell}.
\end{equation*}

\paragraph{From one monomial to the hidden bit.}

If the trial succeeds, Bob defines
\begin{equation*}
    Z
    :=
    \hat{\|}p\hat{\|}_1 
    \operatorname{sgn}(\widehat{p}(S))
    y_J
    \chi_S(x^J).
\end{equation*}
If the birthday collision fails, he sets $Z:=0$. In either case, $|Z| \leq \hat{\|}p\hat{\|}_1$.

The distribution of the indices $(i,\ell)\in[m]\times[r]$ produced by the measurement
does not depend on the sampled Fourier monomial $S$: every pair
subspace has the same probability $2/N$.
We choose $J$ using only the observed indices, and not the observed
pair-product values. Therefore, the success event and the selected
block $J$ are independent of the random choice of $S$.

Conditioned on the birthday collision success and on $J$, using
\eqref{eq:sampled-monomial-unbiased},
\begin{align*}
	\E[S]{Z \mid J, \mathrm{success}}
    =
    y_J
    \E[S]{\hat{\|}p\hat{\|}_1\,
        \operatorname{sgn}(\widehat{p}(S))
        \chi_S(x^J)}
    =
    y_Jp(x^J).
\end{align*}
By the promise of $\bhp_{k,n}^\alpha$, $y_Jf(x^J)=b$. Since $p$ sign-represents $f$ with margin $\beta$,
\[
    f(x^J) p(x^J) = \abs{p(x^J)} \geq \beta.
\]
Hence
\[
    b\,
    \E[S]{Z\mid J,\mathrm{success}} = b\,y_Jp(x^J) = f(x^J) p(x^J) \geq \beta.
\]
Since the trial succeeds with probability at least $2/3$ and
$Z=0$ on failure,
\begin{equation}
    \label{eq:quantum-single-trial-bias}
    b\, \E[]{Z} \geq \frac{2\beta}{3}.
\end{equation}

\paragraph{Amplification.}

Repeat the trial independently $R = \Big\lceil\frac{9\hat{\|}p\hat{\|}_1^2}{2\beta^2}\ln{\frac{1}{\eps}}\Big\rceil$
times, using fresh copies of Alice's state. Let $Z_1,\ldots,Z_R$ be the resulting random variables and let Bob output
\[
    \mathrm{sgn}\!
    \left(
        \sum_{t=1}^R Z_t
    \right).
\]
By \eqref{eq:quantum-single-trial-bias}, $b\, \E[]{Z_t} \geq \frac{2\beta}{3}$, while $|Z_t| \leq \hat{\|}p\hat{\|}_1$. Hoeffding's inequality therefore gives
\[
    \Pr\left[
        \mathrm{sgn}\!
        \left(
            \sum_{t=1}^R Z_t
        \right)
        \neq b
    \right]
    \le
    \exp\left(
        -\frac{
            2R\beta^2
        }{
            9\hat{\|}p\hat{\|}_1^2
        }
    \right),
\]
which is at most $\eps$ for the stated choice of $R$.

\paragraph{Communication complexity.}

Each trial uses $\mathrm{O}((k/\alpha)^{1/r} n^{1-1/r})$ copies of the fingerprint state, and every copy contains $\mathrm{O}(\log n)$ qubits. 
Therefore the total communication is
\[
    \mathrm{O}
    \left(
        \frac{\hat{\|}p\hat{\|}_1^2}{\beta^2}\left(\frac{k}{\alpha}\right)^{1/\lceil d/2 \rceil} n^{1-1/\lceil d/2 \rceil}\log{n}\log\frac1\eps
    \right). \qedhere
\]
\end{proof}

\section{Acknowledgements}
This material is based upon work supported by the U.S.\ Department of Energy,
Office of Science, Accelerated Research in Quantum Computing, Fundamental
Algorithmic Research toward Quantum Utility (FAR-Qu). 

This work was performed, in part, at the Center for Integrated
Nanotechnologies, an Office of Science User Facility operated for the U.S.
Department of Energy (DOE) Office of Science by Los Alamos National Laboratory
(Contract 89233218CNA000001) and Sandia National Laboratories (Contract
DE-NA-0003525).

JFD is supported by the Lend\"ulet ``Momentum'' program of the Hungarian
Academy of Sciences under grant agreement no.\ LP2025-8/2025. MB was supported
in part by NSF grant CNS-2046425. NV is supported in part by the French PEPR integrated projects EPIQ (ANR-22-PETQ-0007) and HQI (ANR-22-PNCQ-0002). Part of this work was done while NV was at Boston University, supported in part by NSF grant CNS-2046425.

\section{AI Disclosure}
The problem, hard distribution, and initial proof strategy were developed by
the authors. However, two key ideas came from GPT-6 Astra. The first was the
idea of replacing the traditional Fourier-analytic approach of bounding every
odd parity of the block outputs $f(x^i)$ with analyzing individual blocks
conditional on earlier blocks. The second, deeper, idea was the recursive
argument used in the classical case to deal with the quadratic dependence on
$\Qc/\Nc$.

The proofs and other text of the paper were written by humans. AI tools were
used to help find typos, bugs, and other errors.

\newpage
\bibliographystyle{alpha}
\bibliography{refs}

\newcommand{\etalchar}[1]{$^{#1}$}
\begin{thebibliography}{BCWdW01}

\bibitem[BARdW08]{BARdW08}
Avraham Ben-Aroya, Oded Regev, and Ronald de~Wolf.
\newblock A hypercontractive inequality for matrix-valued functions with
  applications to quantum computing and {LDCs}.
\newblock In {\em Proceedings of the 49th Annual IEEE Symposium on Foundations
  of Computer Science}, 2008.

\bibitem[BCWdW01]{Buhrman2001quantum}
Harry Buhrman, Richard Cleve, John Watrous, and Ronald de~Wolf.
\newblock Quantum fingerprinting.
\newblock {\em Phys. Rev. Lett.}, 87:167902, Sep 2001.

\bibitem[Bon70]{bonami1970etude}
Aline Bonami.
\newblock {\'E}tude des coefficients de {Fourier} des fonctions de
  ${L}^p({G})$.
\newblock {\em Annales de l'Institut Fourier}, 20(2):335--402, 1970.

\bibitem[BT15]{bun2015dual}
Mark Bun and Justin Thaler.
\newblock Dual lower bounds for approximate degree and {M}arkov-{B}ernstein
  inequalities.
\newblock {\em Information and Computation}, 243:2--25, 2015.
\newblock 40th International Colloquium on Automata, Languages and Programming
  (ICALP 2013).

\bibitem[BYJK04]{bar2004exponential}
Ziv Bar-Yossef, T.~S. Jayram, and Iordanis Kerenidis.
\newblock Exponential separation of quantum and classical one-way communication
  complexity.
\newblock In {\em Proceedings of the Thirty-Sixth Annual ACM Symposium on
  Theory of Computing}, STOC '04, page 128–137, New York, NY, USA, 2004.
  Association for Computing Machinery.

\bibitem[CGSV24]{CGSV24}
Chi-Ning Chou, Alexander Golovnev, Madhu Sudan, and Santhoshini Velusamy.
\newblock Sketching approximability of all finite csps.
\newblock {\em Journal of the ACM}, 71(2):1--74, 2024.

\bibitem[DM19]{doriguello2019quantum}
Jo{\~a}o~Fernando Doriguello and Ashley Montanaro.
\newblock Quantum sketching protocols for {H}amming distance and beyond.
\newblock {\em Phys. Rev. A}, 99:062331, Jun 2019.

\bibitem[DM20]{DoriguelloMontanaro20}
Jo{\~a}o~F. Doriguello and Ashley Montanaro.
\newblock Exponential quantum communication reductions from generalizations of
  the boolean hidden matching problem.
\newblock In {\em 15th Conference on the Theory of Quantum Computation,
  Communication and Cryptography (TQC 2020)}, volume 158 of {\em Leibniz
  International Proceedings in Informatics (LIPIcs)}, pages 1:1--1:16, 2020.

\bibitem[GKK{\etalchar{+}}08]{GKKRW07}
Dmitry Gavinsky, Julia Kempe, Iordanis Kerenidis, Ran Raz, and Ronald de~Wolf.
\newblock Exponential separation for one-way quantum communication complexity,
  with applications to cryptography.
\newblock {\em {SIAM} J. Comput.}, 38(5):1695--1708, 2008.

\bibitem[KKL88]{KKL88}
J.~Kahn, G.~Kalai, and N.~Linial.
\newblock The influence of variables on boolean functions.
\newblock In {\em Proceedings of the 29th Annual Symposium on Foundations of
  Computer Science}, SFCS '88, pages 68--80, Washington, DC, USA, 1988. IEEE
  Computer Society.

\bibitem[KKP18]{KKP18}
John Kallaugher, Michael Kapralov, and Eric Price.
\newblock The sketching complexity of graph and hypergraph counting.
\newblock In {\em 2018 IEEE 59th Annual Symposium on Foundations of Computer
  Science (FOCS)}, pages 556--567. IEEE, 2018.

\bibitem[KKS15]{KKS15}
Michael Kapralov, Sanjeev Khanna, and Madhu Sudan.
\newblock Streaming lower bounds for approximating {MAX-CUT}.
\newblock In {\em 26th ACM-SIAM Symposium on Discrete Algorithms (SODA)}, 2015.

\bibitem[Mon11]{montanaro2011new}
Ashley Montanaro.
\newblock A new exponential separation between quantum and classical one-way
  communication complexity.
\newblock {\em Quantum Info. Comput.}, 11(7–8):574–591, July 2011.

\bibitem[Raz99]{raz1999exponential}
Ran Raz.
\newblock Exponential separation of quantum and classical communication
  complexity.
\newblock In {\em Proceedings of the Thirty-First Annual ACM Symposium on
  Theory of Computing}, STOC '99, page 358–367, New York, NY, USA, 1999.
  Association for Computing Machinery.

\bibitem[She07]{sherstov2007separating}
Alexander~A. Sherstov.
\newblock Separating ac0 from depth-2 majority circuits.
\newblock In {\em Proceedings of the Thirty-Ninth Annual ACM Symposium on
  Theory of Computing}, STOC '07, page 294–301, New York, NY, USA, 2007.
  Association for Computing Machinery.

\bibitem[SWY12]{SWY12}
Yaoyun Shi, Xiaodi Wu, and Wei Yu.
\newblock Limits of quantum one-way communication by matrix hypercontractive
  inequality, 2012.
\newblock Manuscript.

\bibitem[VY11]{verbin2011streaming}
Elad Verbin and Wei Yu.
\newblock The streaming complexity of cycle counting, sorting by reversals, and
  other problems.
\newblock In {\em Proceedings of the 2011 Annual ACM-SIAM Symposium on Discrete
  Algorithms (SODA)}, pages 11--25, 2011.

\bibitem[Yao79]{yao1979some}
Andrew Chi-Chih Yao.
\newblock Some complexity questions related to distributive
  computing(preliminary report).
\newblock In {\em Proceedings of the Eleventh Annual ACM Symposium on Theory of
  Computing}, STOC '79, page 209–213, New York, NY, USA, 1979. Association
  for Computing Machinery.

\bibitem[Yao93]{yao1993quantum}
A.~Chi-Chih Yao.
\newblock Quantum circuit complexity.
\newblock In {\em Proceedings of 1993 IEEE 34th Annual Foundations of Computer
  Science}, pages 352--361, 1993.

\end{thebibliography}

\end{document}